\newtheorem{theorem}{Theorem}[section]
\newtheorem{lemma}[theorem]{Lemma}
\newtheorem{proposition}[theorem]{Proposition}
\theoremstyle{definition}
\newtheorem{definition}[theorem]{Definition}
\theoremstyle{remark}
\theoremstyle{remark}
\newtheorem*{assumption}{Assumption}
\newtheorem{remark}[theorem]{Remark}
\numberwithin{equation}{section}
\email{alex.daletskii@york.ac.uk}
\email{kondrat@math.uni-bielefeld.de}
\email{jkozi@hektor.umcs.lublin.pl}
\begin{document}
\title[Phase Transitions in Continuum Ferromagnets]{Phase Transitions in
Continuum Ferromagnets with Unbounded Spins}
\author{ Alexei Daletskii}
\address{Department of Mathematics, University of York, York YO10 DD, UK}
\author{Yuri Kondratiev}
\address{Fakut\"at f\"ur Mathematik, Universit\"at Bielefeld, Bielefeld
D-33615, Germany}
\author{ Yuri Kozitsky}
\address{Instytut Matematyki, Uniwersytet Marii Curie-Sk{\l }odowskiej,
20-031 Lublin, Poland}
\subjclass{82B44; 82B21}
\keywords{Annealed magnet; configuration space; Gibbs measure; continuum
percolation; Wells inequality}

\begin{abstract}
States of thermal equilibrium of an infinite system of interacting particles
in $\mathbb{R}^{d}$ are studied. The particles bear `unbounded' spins with a
given symmetric a priori distribution. The interaction between the particles
is pairwise and splits into position-position and spin-spin parts. The
position-position part is described by a superstable potential, and the
spin-spin part is attractive and of finite range. Thermodynamic states of
the system are defined as tempered Gibbs measures on the space of marked
configurations. It is proved that the set of such measures contains at least
two elements if the activity is big enough.
\end{abstract}

\maketitle


\section{Introduction}


\subsection{Posing the problem}

The mathematical theory of thermal equilibrium of infinite particle systems
relies on the use of conditional probabilities, see \cite{Geor,Pre,Simon},
by means of which one defines the set of Gibbs measures that exist at given
values of the model parameters. The multiplicity of such measures is then
interpreted as the possibility for the system to undergo a phase transition
and is one of the most fundamental aspects of the theory. Historically, the
Gibbsian formalism was first developed for the Ising spin model, where each
`particle' was associated with a point $x\in \mathbb{Z}^{d}$ and can be in
one of two states, cf. \cite{Do68,Do70}. This is the simplest model of a
crystalline magnet. It took, however, eight years (since the publication of
first Dobrushin's papers) until the Gibbs states of lattice models with
`unbounded' spins were constructed in \cite{LP} by means of new tools
developed during that time. In noncrystalline magnets, the particles are
distributed over a continuous medium (e.g., $\mathbb{R}^{d}$), and their
positions may not be fixed. The corresponding physical substances are e.g.
magnetic gases, ferrofluids, amorphous magnets, etc., see \cite{Gruber} for
further information on this issue. For a ferrofluid with hard core repulsion
and Ising spins, the existence of spontaneous magnetization was proved in
\cite{Gruber}, which later on was extended in \cite{RZ} to similar models
with continuous bounded spins. The results of both these works can be
interpreted as the proof of the multiplicity of the corresponding Gibbs
measures provided their existence is established. In \cite{GeHa}, the
existence and multiplicity of Gibbs measures were proved for the model in
which each particle can be in one of $q$ states -- continuum Potts model.
Our aim is to elaborate the theory of phase transitions in systems of
particles in continuum carrying `unbounded' spins. To this end we employ the
latest developments in the theory of Gibbs measures with irregular
underlying sets \cite{DKKP1,KKP1,KKP2,KPR,KP} combined with contemporary
methods of the analysis on configuration spaces \cite{AKLU,KKun,KKut,KunaPhD}%
.

There are two different approaches to studying continuum systems of
particles with spins: (a) the positions of the particles are taken at random
from an ensemble characterized by a given probability law, and the spins are
distributed according to a random `spin-only' Gibbs measure; (b) the
interaction between the particles contains spin-spin and position-position
parts and the joint probability distribution is given by a `position and
spin' Gibbs measure. Phase transitions in the systems of the first type
(quenched magnets with Poisson-distributed positions) have been considered
in \cite{DKKP}. In the present paper, we study a system of the second type,
with the position-position interactions satisfying the strong superstability
condition, cf. \cite{Ru70,RT,RT1,KPR}. Our main technical tool is the finite
volume reduction to a quenched system and the use of the percolation theory,
in the spirit of \cite{GeHa} and \cite{H}.

\subsection{The paper overview}

We consider the following infinite-particle model. Each particle is
characterized by position $x\in X=\mathbb{R}^{d}$, $d\geq 1$, and spin $%
\sigma \in S=\mathbb{R}$. The particles interact via a pair interaction
potential of the form
\begin{equation}
\Psi (x\times \sigma ,x^{\prime }\times \sigma ^{\prime })=\Phi (x-x^{\prime
})-\phi (x-x^{\prime })\sigma \sigma ^{\prime }  \label{pair-int}
\end{equation}%
and are characterized by a single-particle probability measure $\chi $ on $S$%
. Here $\Phi :\mathbb{R}^{d}\rightarrow \mathbb{R}\cup \{+\infty \}$ and $%
\phi :\mathbb{R}^{d}\rightarrow \mathbb{R}_{+}$ are suitable functions, see
Section \ref{S3} below.

The Gibbs measures of the model are defined as probability measures on the
space $\Gamma (X,S)=\left\{ \hat{\gamma}\subset X\times S:p_{X}(\hat{\gamma}%
)\subset \Gamma (X)\right\} $ of marked configurations, where $p_{X}$ is the
natural projection $X\times S\rightarrow X$ and $\Gamma (X)\ $is the space
of locally finite subsets of $X$. As is typical for systems with unbounded
spins, cf. \cite{DKKP,KKP1,KP,LP}, we work with the Gibbs measures that are
supported on the configurations satisfying certain bounds on their density
and spin growth (the \emph{tempered} Gibbs measures). In the study of the
set $\mathcal{G}^{\mathrm{t}}(\Psi ,\chi )$ of all such measures one
typically poses the following questions:

\begin{itemize}
\item[(E)] \textit{Existence}: is $\mathcal{G}^{\mathrm{t}}(\Psi ,\chi )$
not empty?

\item[(U)] \textit{Uniqueness:} is $\mathcal{G}^{\mathrm{t}}(\Psi ,\chi )$ a
singleton?

\item[(M)] \textit{Multiplicity}: does $\mathcal{G}^{\mathrm{t}}(\Psi ,\chi
) $ contain at least two elements?
\end{itemize}

Usually, only sufficient conditions for positive answer to these questions
are obtained, which justifies distinguishing between (U) and (M). Positive
answer to (M) indicates the appearance of phase transitions in the system.
The comprehensive answer to all the three questions is known only for the
classical Ising model where $X=\mathbb{Z}^{d}$ and $S=\left\{ -1,1\right\} $%
, see e.g., \cite{Simon}. (E) is well-studied also for more general
`crystalline' type spin models, including the case of $X$ being a general
graph and $S=\mathbb{R}$, see \cite{KKP1}. For continuum models with $X=%
\mathbb{R}^{d}$ and compact spin space $S$, (E) is essentially similar to
the case of continuum gas models without spins, see \cite{KPR} and the
references therein. In the case of $S=\mathbb{R}$, questions (E) and (U)
have recently been studied in \cite{DKPP}. In \cite{GeHa}, question (M) was
studied by exploiting a continuum version of the random cluster model and
the percolation theory.

In the present work, we give an answer to question (M) in the general case
of $X=\mathbb{R}^{d}$, $S=\mathbb{R}$ in the absence of the restrictive
`hard core' and `compactness of spins' conditions. Instead, we assume the
strong superstability of the position-position interaction and the
exponential moment bound of the single-particle measure $\chi $, see Section %
\ref{S3}. We exploit the fibre bundle structure of the space $\Gamma (X,S)$
studied in \cite{DKKP,DKKP1}, which allows us to disintegrate any element $%
\mu \in \mathcal{G}^{\mathrm{t}}(\Psi ,V)$ as $\mu (d\hat{\gamma})=\omega
_{\gamma }(d\sigma )\left( p_{X}^{\ast }\mu \right) (d\gamma )$, where $%
\omega _{\gamma }(d\sigma )$ is a Gibbs measure on the product space $%
S^{\gamma }$, for a.a. $\gamma \in \Gamma (X)$. This allows for applying a
suitable modification of methods developed in \cite{DKKP}.

The structure of the paper is as follows. In Section \ref{S2}, we present a
number of facts from the theory of marked configuration spaces. The crucial
one is a fibre bundle structure of such spaces. In Section \ref{S3}, we
describe the model, cf. Assumption (M), and present the main result of this
paper in Theorem \ref{phase}. In Subsection \ref{SS3.3}, we sketch the proof
of the existence of tempered Gibbs measures of our model. The proof of
Theorem \ref{phase} is given in Section \ref{Sphase} and is based on Lemma %
\ref{1lm}, which states that the magnetization in local states can be
uniformly positive. The proof of Lemma \ref{1lm} is in turn based on a
modification of Wells' inequality \cite{W} and the result of \cite{H} that
relates the existence of a ferromagnetic phase of the Ising model on a
general graph to the Bernoulli bond percolation thereon. The existence of
such percolation in our framework is stated in Lemma \ref{th-perc} and
proved in Section \ref{sec-phase}, by extending the general scheme proposed
in \cite{GeHa}. The main idea is to pass to an auxiliary percolation model,
which is dominated by the percolation in question, see Lemmas \ref{5lm}, \ref%
{6lm}, and \ref{lemma1}.

\section{Marked configuration spaces}

\label{S2}

\subsection{The spaces of configurations}


The configuration space on $X=\mathbb{R}^{d}$, $d\geq 1$, is
\begin{equation}
\Gamma (X)=\left\{ \gamma \subset X:\ N\left( \gamma _{\Lambda }\right)
<\infty \text{ for any }\Lambda \in \mathcal{B}_{0}(X)\right\} ,
\label{gamma}
\end{equation}%
where $\mathcal{B}_{0}(X)$ is the collection of all compact subsets of $X$, $%
\gamma _{\Lambda }:=\gamma \cap \Lambda $, and $N\left( \cdot \right) $
denotes cardinality. Let $C_{0}(X)$ be the set of all continuous functions $%
f:X\rightarrow \mathbb{R}$ with compact support. The configuration space $%
\Gamma (X)$ is endowed with the vague topology, which is the weakest
topology that makes continuous all the maps
\begin{equation*}
\Gamma (X)\ni \gamma \mapsto \sum_{x\in \gamma }f(x),\quad f\in C_{0}(X).
\end{equation*}%
This topology is metrizable in the way that makes $\Gamma (X)$ a Polish
space (see, e.g., \cite[Section 15.7.7]{Kal} or \cite[Proposition 3.17]{Res}%
). An explicit construction of the appropriate metric can be found in \cite%
{KKut}. By $\mathcal{P}(\Gamma (X))$ we denote the set of all probability
measures on the Borel $\sigma $-algebra $\mathcal{B}(\Gamma (X))$ of subsets
of $\Gamma (X)$.

\begin{remark}
\label{rem-conf}In a similar fashion, the configuration space $\Gamma (Y)$
can be defined for an arbitrary Riemannian manifold $Y$. In Subsection \ref%
{PLm2}, we use the space $\Gamma (X^{(2)})$, where $X^{(2)}\ $ is the
collection of two-element subsets of $X$, which can be identified with the
symmetrization of the space $\left( X\times X\right) \setminus \left\{
(x,x):x\in X\right\} $ and thus possesses a Riemannian manifold structure.
\end{remark}

Let us now consider the product $X\times S$, $S=\mathbb{R}$. The canonical
projection $p_{X}:X\times S\rightarrow X$ can naturally be extended to the
configuration space $\Gamma (X\times S)$. However, for a configuration $\hat{%
\gamma}\in \Gamma (X\times S)$, its image $p_{X}(\hat{\gamma})$ is a subset
of $X$ that in general admits accumulation and multiple points and hence
does not belong to $\Gamma (X)$. The marked configuration space\textit{\ }$%
\Gamma (X,S)$ is defined in the following way:%
\begin{equation*}
\Gamma (X,S)=\left\{ \hat{\gamma}\in \Gamma (X\times S):p_{X}(\hat{\gamma}%
)\in \Gamma (X)\right\} .
\end{equation*}%
The space $\Gamma (X,S)$ is endowed with a metrizable topology defined as
the weakest topology that makes continuous the maps
\begin{equation}
\Gamma (X,S)\ni \hat{\gamma}\mapsto \sum_{x\in p_{X}(\hat{\gamma}%
)}f(x,\sigma _{x})  \label{top}
\end{equation}%
for all bounded continuous functions $f\in X\times S\rightarrow \mathbb{R}$
such that $\mathrm{supp}f(\cdot ,\sigma )\subset \Lambda $, for some $%
\Lambda \in \mathcal{B}_{0}(X)$ and all $\sigma \in S$. This topology has
been used in \cite{AKLU,CG,KunaPhD}. It makes $\Gamma (X,S)$ a Polish space,
cf. \cite[Section 2]{CG}, where a concrete metric is given. We equip $\Gamma
(X,S)$ with the corresponding Borel $\sigma $-algebra $\mathcal{B}(\Gamma
(X,S))$.

Along with $\Gamma (X,S)$ we will also use the spaces $\Gamma (\Lambda ,S),$
$\Lambda \in \mathcal{B}_{0}(X)$, and the space $\Gamma
_{0}(X,S):=\bigcup_{\Lambda \in \mathcal{B}_{0}(X)}\Gamma (\Lambda ,S)$ of
finite marked configurations, endowed with the Borel $\sigma $-algebras $%
\mathcal{B}(\Gamma (\Lambda ,S))$ and $\mathcal{B}\left( \Gamma
_{0}(X,S)\right) $ respectively, which are induced by the Euclidean
structure of $X$. It is known that $\mathcal{B}(\Gamma _{0}(X,S))=\{A\cap
\Gamma _{0}(X,S):A\in \mathcal{B}(\Gamma (X,S))\}$.

The spaces $\Gamma (\Lambda ,S)$ and $\Gamma _{0}(X,S)$ can be identified
with the corresponding subspaces of $\Gamma (X,S)$ via the natural
embedding. Clearly, these subspaces belong to $\mathcal{B}(\Gamma (X,S))$
and $\sigma $-algebras $\mathcal{B}(\Gamma (\Lambda ,S))$ and $\mathcal{B}
\left( \Gamma _{0}(X,S)\right) $ can be considered as sub-algebras of $%
\mathcal{B}\left( \Gamma (X,S)\right) $.

On the other hand, we can introduce the algebras $\mathcal{B}_{\Lambda
}(\Gamma (X,S))$ of sets $C_{B}:=\left\{ \gamma \in \Gamma (X):\gamma
_{\Lambda }\in B\right\} $, $B\in \mathcal{B}(\Gamma (\Lambda ,S))$ and
define the algebra of local (cylinder) sets
\begin{equation}
\mathcal{B}_{\mathrm{loc}}\left( \Gamma (X,S)\right) :=\bigcup_{\Lambda \in
\mathcal{B}_{0}(X)}\mathcal{B}_{\Lambda }\left( \Gamma (X,S)\right) .
\label{Loc.df}
\end{equation}
In a similar way, one introduces the spaces $\Gamma (\Lambda )$, $\Gamma
_{0}(X)$ and the corresponding algebras $\mathcal{B}(\Gamma (\Lambda ))$, $%
\mathcal{B}(\Gamma _{0}(X))$ and $\mathcal{B}_{\mathrm{loc}}\left( \Gamma
(X)\right) $.

It is possible to show that a given $F:\Gamma_0 (X,S) \to \mathbb{R}$ is $%
\mathcal{B}\left( \Gamma _{0}(X,S)\right) $-measurable if and only if, for
each $n\in \mathbb{N}$, there exists a symmetric Borel function $%
F_{n}:(X\times S)^{n}\rightarrow \mathbb{R}$ such that
\begin{equation*}
F(\hat{\gamma})=F_{n}((x_{1},\sigma _{1}),\dots ,(x_{n},\sigma _{n})),\quad
\hat{\gamma}=\{(x_{1},\sigma _{1}),\dots ,(x_{n},\sigma _{n})\}.
\end{equation*}

For the single-spin measure $\chi \in \mathcal{P}(S)$ ($=:$ the space of
probability measures on $S$) and some $z>0$, we introduce the
Lebesgue-Poisson measure $\hat{\lambda}_{z}$ on $\mathcal{B}(\Gamma
_{0}(X,S))$ by the relation
\begin{eqnarray}
&&\int_{\Gamma _{0}(X,S)}F(\hat{\gamma})\hat{\lambda}_{z}(d\hat{\gamma}%
)=F(\emptyset )  \label{LPM} \\[0.08in]
&&+\sum_{n=1}^{\infty }\frac{z^{n}}{n!}\int_{\left( X\times S\right)
^{n}}F_{n}((x_{1},\sigma _{1}),\dots ,(x_{n},\sigma _{n}))\chi (d\sigma
_{1})dx_{1}\cdots \chi (d\sigma _{n})dx_{n},  \notag
\end{eqnarray}
which has to hold for all measurable $F:\Gamma _{0}(X,S)\rightarrow \mathbb{R%
}_{+}$. Likewise, the Lebesgue-Poisson measure $\lambda _{z}$ on $\mathcal{B}
(\Gamma _{0}(X))$ is defined by
\begin{equation}
\int_{\Gamma _{0}(X)}F(\gamma )\lambda _{z}(d\gamma )=F(\emptyset
)+\sum_{n=1}^{\infty }\frac{z^{n}}{n!}\int_{X^{n}}F_{n}(x_{1},\dots
,x_{n})dx_{1}\cdots dx_{n},  \label{LPMb}
\end{equation}
holding for all measurable $F:\Gamma _{0}(X)\rightarrow \mathbb{R}_{+}$.

\subsection{Disintegration of measures}

\label{SS2.1a}

The space $\Gamma (X,S)$ has the structure of a fibre bundle over $\Gamma
(X) $, with fibres $p_{X}^{-1}(\gamma )$ which can be identified with the
product
\begin{equation*}
S^{\gamma }=\prod\limits_{x\in \gamma }S_{x},\qquad S_{x}=S.
\end{equation*}%
Thus, each $\hat{\gamma}\in \Gamma (X,S)$ can be represented by the pair
\begin{equation*}
\hat{\gamma}=(\gamma ,\sigma _{\gamma }),\quad \text{ where }\gamma =p_{X}(%
\hat{\gamma})\in \Gamma (X),\ \ \sigma _{\gamma }=(\sigma _{x})_{x\in \gamma
}\in S^{\gamma }.
\end{equation*}%
It follows directly from the definition of the corresponding topologies that
the map $p_{X}:\Gamma (X,S)\rightarrow \Gamma (X)$ is continuous. For each $%
B\in \mathcal{B}(\Gamma (X))$, its preimage $p_{X}^{-1}(B)$ is in $\mathcal{B%
}(\Gamma (X,S))$. Likewise, $p_{X}^{-1}(B)\in \mathcal{B}(\Gamma _{0}(X,S))$
for each $B\in \mathcal{B}(\Gamma _{0}(X))$. In particular, $%
p_{X}^{-1}(\gamma )=p_{X}^{-1}(\{\gamma \})=S^{\gamma }\in \mathcal{B}%
(\Gamma _{0}(X,S))\subset \mathcal{B}(\Gamma (X,S))$. We equip each $%
S^{\gamma }$ with the product topology and donote by $\mathcal{B}(S^{\gamma
})$ the corresponding Borel $\sigma $-algebra. By Kuratowski's theorem, see
\cite{Par}, it is possible to show that
\begin{equation*}
\mathcal{B}(S^{\gamma })=\{A\cap S^{\gamma }:A\in \mathcal{B}(\Gamma
(X,S))\}.
\end{equation*}%
Then, for each probability measure $\mu $ on $\mathcal{B}(\Gamma (X,S))$,
one can define its projection $p_{X}^{\ast }\mu $ on $\mathcal{B}(\Gamma
(X)) $ by setting
\begin{equation*}
(p_{X}^{\ast }\mu )(B)=\mu (p_{X}^{-1}(B)),\qquad B\in \mathcal{B}(\Gamma
(X)).
\end{equation*}%
This in turn allows one to disintegrate
\begin{equation}
\mu (d\hat{\gamma})=\omega _{\gamma }(d\sigma _{\gamma })(p_{X}^{\ast }\mu
)(d\gamma ),  \label{di}
\end{equation}%
where $\omega _{\gamma }$ is a probability measure on $\mathcal{B}(S^{\gamma
})$ for $p_{X}^{\ast }\mu $-almost all $\gamma \in \Gamma (X)$. Moreover,
for each $B\in \mathcal{B}(S^{\gamma })$, the map $\gamma \mapsto \omega
_{\gamma }(B)$ is $\mathcal{B}(\Gamma (X))$-measurable. A similar
disintegration can be applied to measures on $\mathcal{B}(\Gamma _{0}(X,S))$%
. In particular, for the measures introduced in (\ref{LPM}) and (\ref{LPMb}%
), one has
\begin{equation}
\hat{\lambda}_{z}(d\hat{\gamma})=\chi _{\gamma }(d\sigma _{\gamma })\lambda
_{z}(d\gamma ),\quad \chi _{\gamma }(d\sigma _{\gamma }):=\bigotimes_{x\in
\gamma }\chi (d\sigma _{x}),\quad \gamma \in \Gamma _{0}(X).  \label{di1}
\end{equation}

\subsection{Tempered marked configurations}

\label{SS2.2a}In the sequel, we use the following partition of $X$. For $%
k=(k^{(1)},\dots ,k^{(d)})\in \mathbb{Z}^{d}$ and $l>0$, we set
\begin{equation}
\Xi _{k}:=\left\{ x\in X:\ x^{(i)}\in \left[ l(k^{(i)}-1/2),l(k^{(i)}+1/2)%
\right) \right\} .  \label{Qkl}
\end{equation}%
Given integer $v>2$, we take $w\in \mathbb{N}$ such that
\begin{equation}
w\geq \frac{2(v-1)}{v-2}.  \label{pq}
\end{equation}%
For these $v$ and $w$, we then define, cf. (\ref{gamma}),
\begin{equation}
F(\hat{\gamma})=\left[ N(\gamma )\right] ^{v}+\sum_{x\in \gamma }|\sigma
_{x}|^{w},\quad \hat{\gamma}=(\gamma ,\sigma _{\gamma })\in \Gamma _{0}(X,S),
\label{Fk}
\end{equation}%
and
\begin{equation}
F_{\alpha }(\hat{\gamma})=\sup_{k\in \mathbb{Z}^{d}}F(\hat{\gamma}_{k})\exp
(-\alpha |k|),\quad \hat{\gamma}\in \Gamma (X,S),\quad \alpha >0,
\label{Fka}
\end{equation}%
where $\gamma _{k}:=\gamma \cap \Xi _{k}$. By means of these functions we
then set
\begin{equation}
\Gamma ^{\mathrm{t}}(X,S)=\left\{ \hat{\gamma}\in \Gamma (X,S):\ F_{\alpha }(%
\hat{\gamma})<\infty \text{ for each }\alpha >0\right\} ,  \label{temper0}
\end{equation}%
which is the space of tempered marked configurations. Note that $\Gamma ^{%
\mathrm{t}}(X,S)\in \mathcal{B}(\Gamma (X,S))$ and is independent of $l$
used in (\ref{Qkl}). In a similar way, we can define the space $\Gamma ^{%
\mathrm{t}}(X)$ of tempered configurations in $X$ using the function $%
F_{X}(\gamma ):=[N(\gamma )]^{v}$ in place of $F(\hat{\gamma})$. Observe
that, for any $\gamma \in \Gamma ^{\mathrm{t}}(X)$ and $\sigma _{\gamma
}=(\sigma _{x})_{x\in \gamma }$ with $\sup_{x\in \gamma }\left\vert \sigma
_{x}\right\vert <\infty $, we have $\left( \gamma ,\sigma _{\gamma }\right)
\in \Gamma ^{\mathrm{t}}(X,S)$.

\begin{definition}
\label{TempMdf} A probability measure $\nu$ on $\mathcal{B}(\Gamma (X,S))$
is said to be tempered if $\nu(\Gamma^{\mathrm{t}}(X,S)) =1$.
\end{definition}

\section{The model and main result}

\label{S3}

\subsection{Description of the model}

\label{SS3.1} The interaction between the particles is supposed to be
pair-wise and consisting of position-position and spin-spin parts described
by measurable functions $\Phi :\mathbb{R}^{d}\rightarrow \mathbb{R}\cup
\{+\infty \}$ and $\phi :\mathbb{R}^{d}\rightarrow \mathbb{R}_{+}$,
respectively, cf. (\ref{pair-int}). Another model `parameter' is a
single-spin measure $\chi \in \mathcal{P}(S)$. Since $\phi \geq 0$, the
spin-spin interaction is of ferromagnetic type, cf. (\ref{H}). By $\Phi _{+}$
we denote the positive part of $\Phi $, i.e., $\Phi _{+}=\max \{\Phi ;0\}$.
Thereby, for $\hat{\gamma}=(\gamma ,\sigma _{\gamma })$ with $\gamma \in
\Gamma _{0}(X)$ and $\sigma _{\gamma }\in S^{\gamma }$, we define
\begin{eqnarray}
H(\gamma ) &=&\sum_{\left\{ x,y\right\} \in \gamma }\Phi (x-y),  \label{H} \\%
[.2cm]
E(\sigma _{\gamma }) &=&-\sum_{\left\{ x,y\right\} \in \gamma }\phi
(x-y)\sigma _{x}\sigma _{y}.  \notag
\end{eqnarray}
The model parameters are supposed to satisfy the following

\begin{assumption}[\textit{M}]
$\frac{{}}{{}}$

\begin{enumerate}
\item[(1)] There exists $r>0$ such that $\Phi_{+} (x)=0$ whenever $|x|> r$.

\item[(2)] For each $\delta>0$, there exists $C_\delta < +\infty$ such that
\begin{equation}
\int_{\left\vert x\right\vert \geq \delta }\Phi _{+}( x )dx\leq C_\delta
<\infty.  \label{int}
\end{equation}

\item[(3)] $\Phi $ is bounded from below and there exist $\epsilon >0$ and
positive $A_{\Phi }$, $B_{\Phi }$ such that
\begin{equation}
H(\gamma )\geq A_{\Phi }\sum_{k\in \mathbb{Z}^{d}}[N(\gamma
_{k})]^{v+\epsilon }-B_{\Phi }N(\gamma ),\quad \gamma _{k}=\gamma \cap \Xi
_{k},  \label{sss}
\end{equation}%
for any $\gamma \in \Gamma _{0}(X)$, where $v$ is as in (\ref{Fk}).

\item[(4)] $\phi :\mathbb{R}^{d}\rightarrow \mathbb{R}_{+}$ is bounded and
such that there exist $\phi _{\ast }>0$ and $R>0$, for which the following
holds
\begin{equation}
\phi (x)\geq \phi _{\ast },\ \ \mathrm{for}\ |x|\leq R;\qquad \phi (x)=0,\ \
\mathrm{for}\ |x|>R.  \label{phi-star1}
\end{equation}

\item[(5)] The measure $\chi \in \mathcal{P}(S)$ is symmetric with respect
to $\sigma \rightarrow -\sigma $. There exist constants $\varkappa >0$ and $%
u>w$, see (\ref{pq}), such that
\begin{equation}
\int_{S}\exp \left( \varkappa |s|^{u}\right) \chi (ds)<\infty ,
\label{s-quad}
\end{equation}%
and $\chi (\left\{ 0\right\} )<1$.

\item[(6)] The parameters $r$ and $R$ satisfy the relation
\begin{equation}
r<R/4.  \label{pq-rel}
\end{equation}
\end{enumerate}
\end{assumption}

\begin{remark}
\label{Phirk} Clearly, positive $\epsilon$ in (\ref{sss}) can be chosen in
such a way that $u$ in (\ref{s-quad}) also satisfies $u > 2(v+\epsilon
-1)/(v+\epsilon -2)$, which is important for proving Proposition \ref%
{equicont1} below, see \cite{DKPP}.
\end{remark}

The property of $\Phi $ as in (\ref{sss}) is called \textit{strong
superstability} \cite{Ru70}. One of the best-understood examples of
interaction of this type is given by the potential, which satisfies $\Phi
\left( x\right) \geq c|x|^{-d(1+\epsilon )}$ in the vicinity of $x=0$. In
this case, one can take any $v>2$. For a detailed study and historical
comments see \cite{RT} and also \cite[Remark 4.1.]{KPR}.

\subsection{Main result}

\label{SS3.2} For $\Delta \subset X$, we write $\Delta ^{c}=X\setminus
\Delta $. Given $\Delta \in \mathcal{B}_{0}(X)$, for $\hat{\eta}=(\eta
,\sigma _{\eta })\in \Gamma (\Delta ,S)$ and $\hat{\gamma}=(\gamma ,\sigma
_{\gamma })\in \Gamma (\Delta ^{c},S)$, we set
\begin{equation}
H(\eta |\gamma )=H(\eta )+\sum_{x\in \eta }\sum_{y\in \gamma }\Phi (x-y)
\label{cond-en-H}
\end{equation}%
and%
\begin{equation}
E(\sigma _{\eta }|\sigma _{\gamma })=E(\sigma _{\eta })-\sum_{x\in \eta
}\sum_{y\in \gamma }\phi (x-y)\sigma _{x}\sigma _{y}.  \label{cond-en}
\end{equation}%
The Gibbs specification $\Pi $ of the model is the family of probability
kernels $\Pi _{\Delta }$, $\Delta \in \mathcal{B}_{0}(X)$, defined by the
integrals
\begin{eqnarray}
\int_{\Gamma (X,S)}F(\hat{\eta})\Pi _{\Delta }\left( d\hat{\eta}|\hat{\gamma}%
\right) &=&[Z_{\Delta }(\hat{\gamma})]^{-1}\int_{\Gamma (\Delta ,S)}F(\hat{%
\eta}_{\Delta }\cup \hat{\gamma}_{\Delta ^{c}})  \label{spec} \\[0.2cm]
&\times &\exp \bigg{(}-H(\eta _{\Delta }|\gamma _{\Delta ^{c}})-E(\sigma
_{\eta _{\Delta }}|\sigma _{\gamma _{\Delta ^{c}}})\bigg{)}\hat{\lambda}%
_{z}(d\hat{\eta}_{\Delta }),  \notag
\end{eqnarray}%
which has to hold for all measurable functions $F:\Gamma (X,S)\rightarrow
\mathbb{R}_{+}$ and all $\hat{\gamma}\in \Gamma ^{\mathrm{t}}(X,S)$, see (%
\ref{temper0}). Here $\hat{\lambda}_{z}$ is the marked Lebesgue-Poisson
measure defined in (\ref{LPM}) and
\begin{equation*}
Z_{\Delta }(\hat{\gamma})=\int_{\Gamma (\Delta ,S)}\exp \bigg{(}-H(\eta
_{\Delta }|\gamma _{\Delta ^{c}})-E(\sigma _{\eta _{\Delta }}|\sigma
_{\gamma _{\Delta ^{c}}})\bigg{)}\ \hat{\lambda}_{z}(d\hat{\eta}_{\Delta })
\end{equation*}%
is the normalizing factor (partition function) making $\Pi _{\Delta }\left(
\cdot \left\vert \hat{\gamma}\right. \right) $ a probability measure on $%
\Gamma (X,S)$, provided $Z_{\Delta }(\hat{\eta})\neq 0$ which is the case
under Assumption (M), see \cite{DKPP}.

A probability measure $\nu \in \mathcal{P}(\Gamma(X,S))$ is said to be a
Gibbs measure associated with the specification $\Pi $ if it satisfies the
Dobrushin-Lanford-Ruelle (DLR) equation
\begin{equation}
\nu (B)=\int_{\Gamma (X,S)}\Pi _{\Delta }\left( B\left\vert \hat{\gamma}%
\right. \right) \nu (d\hat{\gamma}) ,  \label{DLR}
\end{equation}%
which has to hold for all $B\in \mathcal{B}(\Gamma (X,S))$ and $\Delta \in
\mathcal{B}_{0}(X)$. By $\mathcal{G}^{\mathrm{t}} (\Gamma(X,S))$ we denote
the set of all tempered Gibbs measures, see Definition \ref{TempMdf}. The
result of this work is given in the following

\begin{theorem}
\label{phase} Let Assumption (M) hold and $d\geq 2$. Then there exists $%
z_{c}>0$ such that
\begin{equation*}
N(\mathcal{G}^{\mathrm{t}}(\Gamma (X,S)))\geq 2
\end{equation*}%
for all $z>z_{c}$.
\end{theorem}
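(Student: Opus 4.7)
The plan is to exhibit two distinct tempered Gibbs measures $\mu^{+}$ and $\mu^{-}$ constructed as thermodynamic limits along an exhausting sequence of cubes $\Delta_n \uparrow X$ with boundary conditions that break the $\mathbb{Z}_2$ symmetry $\sigma \to -\sigma$ (for example, the empty position boundary condition dressed with a constant positive, respectively negative, magnetic field $h$ that is sent to zero after the volume limit, or equivalently with the spin boundary condition $\sigma_y \equiv +h$ and $-h$). By the symmetry of $\chi$ and of $\Psi$ in Assumption (M)(5), these two sequences differ only through the sign of $h$, so $\mu^{+}$ and $\mu^{-}$ are transformed into each other by the global spin-flip. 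They are therefore distinct as soon as the local magnetization $m_\Lambda(\mu^{+}) := \int \sigma_x \, \mu^{+}(d\hat\gamma)$ (suitably averaged over $x$ in a fixed cell $\Xi_k$) is strictly positive, uniformly in $n$ and in the limit. Existence of such tempered limits (tightness, temperedness, and the DLR property) is obtained by the scheme sketched in Subsection \ref{SS3.3}; this part is standard and relies on the superstability bound (\ref{sss}) combined with (\ref{s-quad}) as in \cite{DKPP,KPR}.

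The heart of the matter is the uniform lower bound on $m_\Lambda(\mu^{+})$, which is the content of Lemma \ref{1lm}. Fix a boundary condition $\hat\gamma$ and disintegrate the finite-volume measure via (\ref{di}): first sample the positions $\eta_\Delta$ from the position marginal, then sample the spins $\sigma_{\eta_\Delta}$ from the quenched Gibbs measure $\omega_{\eta_\Delta}$ on $S^{\eta_\Delta}$ with ferromagnetic Hamiltonian (\ref{cond-en}) and symmetric single-spin distribution $\chi$. The quenched spin system is a continuous-spin Ising-type ferromagnet on the random graph $G(\eta_\Delta)$ with vertex set $\eta_\Delta$ and edges $\{x,y\}$ satisfying $|x-y| \leq R$, in which edge couplings are bounded below by $\phi_\ast$. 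A modification of Wells' inequality \cite{W}, applied in the version adapted to general graphs and general symmetric $\chi$, yields a pointwise bound
\begin{equation*}
\int \sigma_x \, \omega_{\eta_\Delta}^{+}(d\sigma) \;\geq\; c_\chi \, \mathbb{P}^{+}_{G(\eta_\Delta),\beta}\bigl(\tau_x = +1\bigr),
\end{equation*}
where $c_\chi > 0$ depends only on $\chi$ and the right-hand side is the $+$-boundary magnetization of the standard Ising model on $G(\eta_\Delta)$ at an inverse temperature $\beta$ proportional to $\phi_\ast$.

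Next, one invokes the result of \cite{H}: for the Ising ferromagnet on an arbitrary graph, the $+$-phase magnetization at a vertex $x$ is bounded below by the probability that $x$ lies in an infinite cluster under Bernoulli bond percolation with parameter $p = 1 - e^{-2\beta \phi_\ast}$ on the same graph. Averaging over the position marginal, the uniform positivity of $m_\Lambda(\mu^{+})$ thus reduces to the assertion that the \emph{annealed} Bernoulli bond percolation on $G(\gamma)$, where $\gamma$ is sampled from the position marginal, percolates with positive probability for $z$ sufficiently large. This is exactly Lemma \ref{th-perc}.

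The hard part is proving Lemma \ref{th-perc} in the tempered, superstable setting. Here I would follow the strategy of \cite{GeHa}: construct an auxiliary discretized site-percolation model on the coarse grid $\{\Xi_k : k \in \mathbb{Z}^d\}$ of (\ref{Qkl}) (with cell side $l$ chosen so that $2l < R$ and $l > 2r$, which is permitted by (\ref{pq-rel})), declare a site $k$ \emph{good} if $\Xi_k$ contains at least one point of $\gamma$ whose local energy is not too large, and show that (i) any two good neighbouring cells produce an open edge in the continuum bond percolation (thanks to $l < R/2$), and (ii) the induced site process on $\mathbb{Z}^d$ stochastically dominates a Bernoulli site field of parameter $p(z)$ with $p(z) \to 1$ as $z \to \infty$. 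The latter requires taming the position-position interaction: because of (M)(1) only cells within distance $r$ can directly interfere, and $r < R/4 < l/2$ ensures that distant cells are independent enough to make the domination work; the strong superstability bound (\ref{sss}) is what guarantees that the bad-site probability decays with the local crowding. Once $p(z)$ exceeds the critical threshold for Bernoulli site percolation on $\mathbb{Z}^d$, $d \geq 2$, one obtains an infinite open cluster, proving Lemma \ref{th-perc}, and thereby the full theorem via Steps 1--3 above. The chain of dominations in Lemmas \ref{5lm}, \ref{6lm}, \ref{lemma1} is where I expect the most technical work, especially in controlling the effect of the tempered but unbounded boundary configuration $\hat\gamma$ on the finite-volume percolation probabilities uniformly in $\Delta$.
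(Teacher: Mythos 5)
Your overall architecture coincides with the paper's: symmetry-breaking boundary conditions, disintegration of the finite-volume measure into a position marginal and a quenched spin measure via (\ref{di2}), reduction of the quenched continuous-spin ferromagnet to an Ising model via Wells' and GKS inequalities, the magnetization-versus-percolation bound of \cite{H}, and a coarse-grained comparison in the style of \cite{GeHa}. You correctly identify Lemma \ref{1lm} and the percolation Lemma \ref{th-perc} as the two central pieces, with the technical work residing in Lemmas \ref{5lm}, \ref{6lm}, \ref{lemma1}.

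There is, however, a genuine gap in your sketch of Lemma \ref{th-perc}. You declare a cell $\Xi_k$ good when it contains at least one point of $\gamma$ (with controlled energy), and then argue that $p(z)\to 1$ as $z\to\infty$ yields percolation. But the object that must percolate is the $q$-thinned graph $\zeta^q$ of (\ref{q-thinning}), and the thinning parameter $q$ is fixed \emph{before} $z$: it is dictated by the Wells/Ising comparison through the condition $\phi_* > (a^2/2)\log\frac{1+q}{1-q}$ in the proof of Lemma \ref{2lm}, where $a$ is in turn pinned down by $\chi$ via the Wells condition $\chi([a\sqrt 2,\infty))\geq\chi([0,a])$. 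In general $q$ cannot be pushed close to $1$. If a good cell carries a single particle, two neighbouring good cells are joined in $\zeta^q$ with probability only $q$, so the coarse-grained model on $\mathbb{Z}^d$ you obtain is a site/bond process with site parameter $p(z)$ and bond parameter $q$; sending $z\to\infty$ raises $p(z)$ but leaves the bond parameter fixed at $q$, and the model cannot reach its threshold if $q$ is small. The paper circumvents this by requiring $N(\gamma_k)\geq n_*$ with $n_*$ \emph{large}: the complete graph on $\gamma_k$ survives $q$-thinning as a connected graph with probability $\varrho(n_*,q)$ of (\ref{di10a}), and at least one of the $n_*^2$ parallel bonds between two adjacent cells survives with probability $\psi(n_*,n_*,q)=1-(1-q)^{n_*^2}$ of (\ref{NN2}); both tend to $1$ as $n_*\to\infty$ for every fixed $q\in(0,1)$. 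Hence the effective coarse-grained bond parameter $h(n_*,q)$ of (\ref{NK}) can be pushed above $p^{\mathrm{site}}(d)$ by choosing $n_*$ first (condition (\ref{A})), and only afterwards does $z$ enter, through Lemmas \ref{5lm} and \ref{lemma1}, to make the cell-occupation probability large. Your discretization is missing $n_*$ as a free parameter; without it the reduction collapses whenever $q<p^{\mathrm{site}}(d)$. A secondary inaccuracy: the cell side must satisfy $l\sqrt{d+3}\leq R$ so that every point of $\gamma_{k_1}$ is within $R$ of every point of $\gamma_{k_2}$ for $k_1\sim k_2$ (the paper takes $l=R/(2\sqrt d)$ in (\ref{lk1})); your condition $2l<R$ alone does not ensure this for $d\geq 2$.
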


Observe that Theorem \ref{phase} contains two quite different in their
nature statements: (i) $N(\mathcal{G}^{\mathrm{t}}(\Gamma (X,S)))\neq
\emptyset $ and (ii) $\mathcal{G}^{\mathrm{t}}(\Gamma (X,S))$ contains at
least two elements. In the next section, we present a sketch of the proof of
(i). A complete proof of this is given in \cite{DKPP}. The proof of (ii) is
based on the comparison with the classical Ising model on a random geometric
graph and its relationship with percolation theory on this graph and will be
given in Sections \ref{Sphase} and \ref{sec-phase}.

\subsection{The existence of Gibbs measures}

\label{SS3.3}

The main idea here is to show that, for at least some $\hat{\gamma}$, the
family
\begin{equation*}
\{\Pi _{\Lambda }(\cdot |\hat{\gamma})\}_{\Lambda \in \mathcal{B}%
_{0}(X)}\subset \mathcal{P}(\Gamma (X,S))
\end{equation*}%
has accumulation points, which solve (\ref{DLR}) and are tempered measures
in the sense of Definition \ref{TempMdf}. These accumulation points are
sought in the local set convergence topology ($\mathfrak{L}$-topology),
which is defined as the weakest topology on $\mathcal{P}(\Gamma (X,S))$ that
makes continuous all the evaluation maps $\mu \mapsto \mu (A)$, $A\in
\mathcal{B}_{\mathrm{loc}}(\Gamma (X,S))$, see (\ref{Loc.df}). This topology
is weaker than the usual weak topology for which the relative compactness is
established by means of Prokhorov's theorem, see, e.g., \cite{Par}. Instead
we can use the following instruments, cf. \cite[Prop. 4.9]{Geor}.

\begin{definition}
\label{equi} A sequence $\left\{ \mu _{n}\right\} _{n\in \mathbb{N}} \subset
\mathcal{P}(\Gamma (X,S))$ is said to be locally equicontinuous (LEC) if for
any $\Delta \in \mathcal{B}_{0}(X)$ and any $\left\{ B_{m}\right\} _{m\in
\mathbb{N}}\subset \mathcal{B}(\Gamma (\Delta ,S))$, $B_{m}\searrow
\emptyset $, $m\rightarrow \infty $, it follows that
\begin{equation*}
\lim_{m\rightarrow \infty} \limsup_{n\rightarrow \infty} \mu _{n}\left(
B_{m}\right) =0.
\end{equation*}
\end{definition}

\begin{proposition}
\label{equicont1}Each LEC sequence $\left\{ \mu _{n}\right\} _{n\in \mathbb{N%
}} \subset \mathcal{P}(\Gamma (X,S))$ has accumulation points in the $%
\mathfrak{L}$-topology, which are probability measure on $\Gamma (X,S)$.
\end{proposition}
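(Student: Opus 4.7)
The plan is to adapt the classical compactness argument for the local set convergence topology (cf.\ \cite[Prop.~4.9]{Geor}) to the marked-configuration setting. First, I would fix an exhausting sequence $\Lambda_{k}\in \mathcal{B}_{0}(X)$ with $\Lambda_{k}\nearrow X$ and, for each $k$, pick a countable algebra $\mathcal{A}_{k}\subset \mathcal{B}_{\Lambda_{k}}(\Gamma (X,S))$ generating $\mathcal{B}_{\Lambda_{k}}$; such algebras exist because $\Gamma (\Lambda_{k},S)$ is Polish and hence its Borel $\sigma$-algebra is countably generated. The union $\mathcal{A}:=\bigcup_{k}\mathcal{A}_{k}$ is a countable subfamily of $\mathcal{B}_{\mathrm{loc}}(\Gamma (X,S))$, so a standard Cantor diagonal extraction on the $[0,1]$-valued sequences $\{\mu_{n}(A)\}_{n}$, $A\in \mathcal{A}$, yields a subsequence $\{\mu_{n_{j}}\}$ along which the limit $\mu(A):=\lim_{j}\mu_{n_{j}}(A)$ is well defined for every $A\in \mathcal{A}$.

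Next I would upgrade the nonnegative, finitely additive set function $\mu$ on $\mathcal{A}$ to a genuine probability measure. The crucial point is continuity at $\emptyset$: if $B_{m}\in \mathcal{A}_{k}$ with $B_{m}\searrow \emptyset$, then
\begin{equation*}
\mu(B_{m})\;=\;\lim_{j\to\infty}\mu_{n_{j}}(B_{m})\;\leq\;\limsup_{n\to\infty}\mu_{n}(B_{m}),
\end{equation*}
and Definition~\ref{equi} applied with $\Delta=\Lambda_{k}$ forces $\mu(B_{m})\to 0$ as $m\to\infty$. By the Carath\'eodory extension theorem, $\mu$ therefore extends uniquely to a probability measure $\mu^{(k)}$ on $\mathcal{B}_{\Lambda_{k}}(\Gamma (X,S))$. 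Because the restriction of $\mu^{(k')}$ to $\mathcal{B}_{\Lambda_{k}}$ agrees with $\mu^{(k)}$ on the generating algebra $\mathcal{A}_{k}$ whenever $k<k'$, the family $\{\mu^{(k)}\}_{k}$ is mutually consistent.

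To assemble these into a single measure on $\mathcal{B}(\Gamma (X,S))$, I would apply Carath\'eodory once more to the countable algebra $\mathcal{A}$, whose $\sigma$-hull is precisely $\mathcal{B}_{\mathrm{loc}}(\Gamma (X,S))$, using the same LEC-based continuity-at-$\emptyset$ argument for arbitrary decreasing sequences localized in some $\Lambda_{k}$. Since $\mathcal{B}(\Gamma (X,S))$ is generated by $\mathcal{B}_{\mathrm{loc}}(\Gamma (X,S))$, this produces a probability measure $\mu$ on $\Gamma (X,S)$. By construction $\mu_{n_{j}}(A)\to \mu(A)$ for every $A\in \mathcal{B}_{\mathrm{loc}}(\Gamma (X,S))$, which is exactly convergence in the $\mathfrak{L}$-topology.

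The main obstacle is the $\sigma$-additivity step. The space $\Gamma (X,S)$ carries no useful compactness, and a Prokhorov-type argument in the weak topology would require tightness that a general LEC sequence need not satisfy (spin values or particle counts can escape to infinity). The LEC condition is precisely the weaker substitute needed in the $\mathfrak{L}$-framework: it controls only the mass that could be lost through a \emph{local} decreasing sequence $B_{m}\searrow \emptyset$ inside a fixed $\Delta \in \mathcal{B}_{0}(X)$, which is exactly what the Carath\'eodory extension on $\mathcal{A}$ demands. Once this continuity at $\emptyset$ is in hand, the diagonal extraction and the consistent extension are essentially bookkeeping.
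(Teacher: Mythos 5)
The paper does not actually supply a proof of this proposition; it is quoted as a known fact, with a pointer to \cite[Prop.~4.9]{Geor}. So the comparison here is really between your argument and the standard one behind Georgii's Prop.~4.9. Your first two paragraphs are essentially that argument and are fine: the diagonal extraction on a countable algebra $\mathcal{A}=\bigcup_k\mathcal{A}_k$ (you should arrange $\mathcal{A}_k\subset\mathcal{A}_{k+1}$ so that $\mathcal{A}$ is indeed an algebra), and then, within a fixed $\Lambda_k$, LEC gives continuity at $\emptyset$ on $\mathcal{A}_k$ and hence a probability measure $\mu^{(k)}$ on $\mathcal{B}_{\Lambda_k}(\Gamma(X,S))$, with the family $\{\mu^{(k)}\}$ consistent.

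The gap is in the final assembly. You say you will apply Carath\'eodory ``once more to the countable algebra $\mathcal{A}$ \dots\ using the same LEC-based continuity-at-$\emptyset$ argument for arbitrary decreasing sequences localized in some $\Lambda_k$.'' But the Carath\'eodory extension on $\mathcal{A}$ requires $\mu(B_m)\to 0$ for \emph{every} sequence $B_m\in\mathcal{A}$ with $B_m\searrow\emptyset$, and there is no reason such a sequence must be localized in a single $\Lambda_k$: one can have $B_m\in\mathcal{A}_{k_m}$ with $k_m\to\infty$, so that each $B_m$ imposes constraints in a strictly larger box. LEC says nothing about such sequences, and consistency of the family $\{\mu^{(k)}\}$ alone does not supply the missing continuity at $\emptyset$ — this is precisely the nontrivial content of the Kolmogorov extension problem, and it can fail without additional topological input. (Also a small slip: the $\sigma$-algebra generated by $\mathcal{A}$ is $\mathcal{B}(\Gamma(X,S))$, not the algebra $\mathcal{B}_{\mathrm{loc}}(\Gamma(X,S))$.)

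The correct way to finish is the Kolmogorov/Prokhorov theorem for projective limits of Polish (standard Borel) spaces: $\Gamma(X,S)$ is the projective limit of the system $\{\Gamma(\Lambda_k,S)\}_k$ with the restriction maps $\hat\gamma\mapsto\hat\gamma_{\Lambda_k}$, each $\Gamma(\Lambda_k,S)$ is Polish, each $\mu^{(k)}$ is therefore tight (Ulam), and the consistent family $\{\mu^{(k)}\}$ extends uniquely to a probability measure on $\mathcal{B}(\Gamma(X,S))$. Convergence $\mu_{n_j}(A)\to\mu(A)$ on $\mathcal{A}$ then propagates to all of $\mathcal{B}_{\mathrm{loc}}(\Gamma(X,S))$ by the local equicontinuity (approximate $A\in\mathcal{B}_{\Lambda_k}$ by sets of $\mathcal{A}_k$), giving convergence in the $\mathfrak{L}$-topology. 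In short: your extraction and the consistent local family are right, but you must replace the second Carath\'eodory step with a genuine projective-limit extension argument that uses the Polishness of the local configuration spaces — LEC alone does not close the argument.
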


Let $\{\Lambda _{m}\}_{m\in \mathbb{N}}$, be an exhausting sequence of
compact subsets of $X$. This means that $\Lambda_m \subset \Lambda_{m+1}$
for all $m\in \mathbb{N}$ and $\Lambda _{m}\nearrow X,\ m\rightarrow \infty $%
. Set $\Pi_{m}=\Pi _{\Lambda _{m}}\left( \cdot|\hat{\gamma}\right)$, $\hat{%
\gamma} \in \Gamma^{\mathrm{t}}(X,S)$. The following fact was proved in \cite%
{DKPP}.

\begin{proposition}
\label{equicont2} For any $\hat{\gamma}\in \Gamma ^{\mathrm{t}}(X,S)$ and
any choice of the exhausting sequence $\{\Lambda _{m}\}_{m\in \mathbb{N}}$,
the sequence $\left\{ \Pi _{m}\right\} _{m\in \mathbb{N}}$ is LEC.
\end{proposition}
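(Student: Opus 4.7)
The plan is to study the $\Delta$-marginals $\nu_n := (p_\Delta)_* \Pi_n$ for a fixed $\Delta \in \mathcal{B}_0(X)$ and to show a uniform absolute-continuity-type bound strong enough to kill any descending sequence $B_m \searrow \emptyset$ in $\mathcal{B}(\Gamma(\Delta, S))$. I would first fix $n_0$ large enough that the $\max(r, R)$-neighborhood of $\Delta$ lies inside $\Lambda_n$ for all $n \geq n_0$; by the finite range of both $\Phi_+$ and $\phi$ (Assumption (M)(1), (M)(4)) the boundary datum $\hat{\gamma}_{\Lambda_n^c}$ then does not interact directly with $\hat{\eta}_\Delta$. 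Using the factorization $\hat{\lambda}_z(d\hat{\eta}_{\Lambda_n}) = \hat{\lambda}_z(d\hat{\eta}_\Delta)\,\hat{\lambda}_z(d\hat{\eta}_{\Lambda_n \setminus \Delta})$ and peeling off the self-energy $H(\eta_\Delta) + E(\sigma_{\eta_\Delta})$ of $\hat{\eta}_\Delta$ from the full exponent in (\ref{spec}), I would write $\nu_n(d\hat{\eta}_\Delta) = g_n(\hat{\eta}_\Delta)\,\hat{\lambda}_z(d\hat{\eta}_\Delta)$ for an explicit density $g_n$, expressed as a ratio of conditional partition functions over the annulus $\Lambda_n \setminus \Delta$.

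The heart of the proof would be a Ruelle-type exponential moment bound, uniform in $n \geq n_0$, of the shape $\sup_n \int \exp(\theta F(\hat{\eta}_\Delta))\,\nu_n(d\hat{\eta}_\Delta) \leq C(\Delta, \hat{\gamma})$ with $F$ as in (\ref{Fk}) and some $\theta > 0$. The position contribution $[N(\eta_\Delta)]^v$ is controlled via the strong superstability (\ref{sss}): the lower bound of order $[N(\eta_\Delta)]^{v+\epsilon}$ beats the $v$-th power in the exponent thanks to the strict slack $\epsilon > 0$, and short-range interaction errors with the annulus and exterior are absorbed using $\hat{\gamma} \in \Gamma^{\mathrm{t}}(X,S)$. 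The spin contribution $\sum_x |\sigma_x|^w$ is handled by bounding $\sum_{\{x,y\}} \phi(x-y)\sigma_x\sigma_y \leq \tfrac{1}{2}\|\phi\|_\infty N_{\max}(\eta) \sum_{x} |\sigma_x|^2$, where $N_{\max}(\eta)$ is the maximum $\eta$-count in an $R$-ball (itself controlled by the preceding step), and invoking (\ref{s-quad}) with $u > w \geq 2$ so that the Gaussian excess is absorbed in the $\chi$-moment.

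With the exponential moment in hand, Markov's inequality yields $\sup_n \nu_n(\{F > K\}) \leq C e^{-\theta K}$. On the complementary level set $\{F \leq K\}$, which consists of configurations with at most $K^{1/v}$ points in $\Delta$ each of spin magnitude at most $K^{1/w}$ and hence has finite $\hat{\lambda}_z$-mass, the same type of energy estimates show that $g_n$ is uniformly bounded by some $M_K < \infty$, because both $H(\eta_\Delta) + E(\sigma_{\eta_\Delta})$ and the conditional partition-function ratio are pinned down on this set. Splitting
$$\nu_n(B_m) \leq M_K\,\hat{\lambda}_z\bigl(B_m \cap \{F \leq K\}\bigr) + C e^{-\theta K}$$
and using $\sigma$-additivity of the finite measure $\hat{\lambda}_z|_{\{F \leq K\}}$ to send $\hat{\lambda}_z(B_m \cap \{F \leq K\}) \to 0$ as $m \to \infty$, one obtains $\limsup_n \nu_n(B_m) \leq C e^{-\theta K}$, and the LEC conclusion follows by letting $K \to \infty$.

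The main obstacle is the uniform exponential moment bound. The classical Ruelle scheme for position-only superstable systems must be upgraded to accommodate the unbounded spin-spin coupling, and this is where every quantitative condition in Assumption (M) is used: the strict gain $\epsilon > 0$ in (\ref{sss}) furnishes the room needed to absorb the quadratic spin excess; the calibration $u > w$ in (\ref{s-quad}), together with the sharper balance $u > 2(v+\epsilon - 1)/(v+\epsilon - 2)$ from Remark \ref{Phirk}, lines up the spin moment with the position-density control; and temperedness of $\hat{\gamma}$ is what allows the boundary terms to be absorbed into constants independent of $n$. Since the detailed verification of these estimates is carried out in \cite{DKPP}, I would appeal to the corresponding technical lemma there and only outline the adjustments specific to the marked setting.
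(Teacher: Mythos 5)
The paper does not actually prove Proposition~\ref{equicont2}; it states only ``The following fact was proved in \cite{DKPP}'' and moves on, so there is no in-paper argument to measure you against. On its own terms your sketch follows the standard Ruelle--Georgii route (uniform exponential moment for $F$, Markov on $\{F>K\}$, bounded-density plus dominated convergence on $\{F\leq K\}$), and this is almost certainly the strategy of \cite{DKPP}; your deferral to that reference for the core superstability estimates is also consistent with what the present paper itself does.

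One point you should correct: you write that once the $\max(r,R)$-neighborhood of $\Delta$ is inside $\Lambda_n$, the boundary datum $\hat{\gamma}_{\Lambda_n^c}$ ``does not interact directly with $\hat{\eta}_\Delta$,'' invoking finite range of $\Phi_+$ and $\phi$. That is only half right. The spin coupling $\phi$ is genuinely finite range by (M)(4), so the term $\sum_{x\in\eta_\Delta}\sum_{y\in\gamma_{\Lambda_n^c}}\phi(x-y)\sigma_x\sigma_y$ does vanish. But (M)(1) concerns only the positive part $\Phi_+$; the potential $\Phi$ itself may have a long-range \emph{attractive} tail (it is merely bounded below by (M)(3)), so the cross term $\sum_{x\in\eta_\Delta}\sum_{y\in\gamma_{\Lambda_n^c}}\Phi(x-y)$ in $H(\eta_\Delta|\gamma_{\Lambda_n^c})$ need not be zero for $n\geq n_0$. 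It must instead be controlled using the lower bound on $\Phi$ together with a regularity/decay condition on $\Phi_-$ and temperedness of $\hat{\gamma}$, and it feeds into both the density $g_n$ and the annulus partition-function ratio. This does not derail the outline — the contribution is a bounded multiplicative factor once the appropriate summability of $\Phi_-$ over the tempered exterior configuration is granted — but asserting that the direct interaction vanishes overstates what Assumption~(M) gives you and hides exactly the bookkeeping that makes the uniform moment bound nontrivial. Similarly, your claim that the conditional partition-function ratio is ``pinned down'' on $\{F\leq K\}$ is the crux of the matter rather than an easy consequence: bounded $N(\eta_\Delta)$ and bounded $|\sigma_x|$ control the $\Delta$-side of the cross-interaction, but the annulus side still has to be integrated against the superstability penalty, which is precisely the technical lemma in \cite{DKPP} that you appeal to.
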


The next theorem states sufficient conditions for the existence and
uniqiness of tempered Gibbs measures.

\begin{theorem}
\label{Atm}\cite{DKPP} Under Assumption (M) the following holds.

\begin{enumerate}
\item[(i)] The set $\mathcal{G}^{\mathrm{t}}(\Gamma (X,S))$ is nonempty;
each of its elements has the property
\begin{equation}
\forall \alpha >0\qquad \quad \underset{k\in \mathbb{Z}^{d}}{\mathrm{sup}}%
\int_{\Gamma (X,S)}e^{\alpha F(\hat{\gamma}_{k})}\mu \left( d\hat{\gamma}%
\right) <\infty ,  \label{moment-est}
\end{equation}%
cf. (\ref{Fk}) and (\ref{Fka}).

\item[(ii)] There exist constants $\phi _{0},z_{0}>0$ such that $N\left(
\mathcal{G}(\Gamma ^{t}(X,S))\right) =1$ whenever $\phi (x)\leq \phi _{0}$, $%
|x|\leq R$, and $z\leq z_{0}$.
\end{enumerate}
\end{theorem}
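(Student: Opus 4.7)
The strategy is to prove (i) by compactness plus Ruelle-type a priori bounds, and (ii) by a Dobrushin-type contraction argument adapted to continuum systems with unbounded spins.

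\textbf{Existence via accumulation points.} I would fix any reference boundary $\hat{\gamma}^{\circ}\in\Gamma^{\mathrm{t}}(X,S)$ (for instance the empty configuration) together with an exhausting sequence $\Lambda_m\nearrow X$, and set $\Pi_m:=\Pi_{\Lambda_m}(\cdot\,|\,\hat{\gamma}^{\circ})$. By Proposition \ref{equicont2} this sequence is LEC, so Proposition \ref{equicont1} supplies an $\mathfrak{L}$-accumulation point $\mu\in\mathcal{P}(\Gamma(X,S))$. To check that $\mu$ satisfies the DLR equation \eqref{DLR}, use the consistency $\int\Pi_\Delta(B|\cdot)\,d\Pi_m=\Pi_m(B)$ valid as soon as $\Delta\subset\Lambda_m$, and pass to the $\mathfrak{L}$-limit along the convergent subsequence. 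The subtlety is that $\mathfrak{L}$-convergence is strictly weaker than weak convergence, so one needs that $\hat{\gamma}\mapsto\Pi_\Delta(B|\hat{\gamma})$ is $\mu$-a.s.\ continuous on tempered boundaries; this is provided by the finite range of $\phi$, by the integrability \eqref{int} of $\Phi_{+}$, and by the moment bound developed next.

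\textbf{Moment bound and temperedness.} The heart of (i) is the uniform bound \eqref{moment-est}. I would derive it in two steps. First, a lower bound $Z_\Delta(\hat{\gamma}^{\circ})\geq \exp(-c|\Delta|)$ obtained by restricting the integral to the empty configuration and using \eqref{int}. Second, an upper bound on $\int e^{\alpha F(\hat{\eta}_k)}\exp(-H(\eta|\gamma^{\circ})-E(\sigma_\eta|\sigma_{\gamma^{\circ}}))\,\hat{\lambda}_z(d\hat{\eta})$ obtained by: (a) using the strong superstability \eqref{sss} to dominate $\alpha[N(\eta_k)]^{v}$ (and the boundary contributions along $\partial\Lambda_m$) by the $[N(\eta_k)]^{v+\epsilon}$ term of the position Hamiltonian; (b) using the exponential spin moment \eqref{s-quad} with $u>w$ to control $\alpha|\sigma_x|^{w}$ by $\exp(\varkappa|\sigma_x|^{u})$ via Young's inequality; (c) absorbing the spin-spin cross terms $\phi(x-y)\sigma_x\sigma_y$ into the quadratic mass produced by (a) and (b), exploiting boundedness and finite range of $\phi$. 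This yields $\int e^{\alpha F(\hat{\eta}_k)}\,\Pi_m(d\hat{\eta})\leq C_\alpha$ uniformly in $m$ and $k$. Since each $\hat{\eta}\mapsto\min\{e^{\alpha F(\hat{\eta}_k)},N\}$ is local and bounded, the $\mathfrak{L}$-limit preserves the bound for $\mu$; a Borel--Cantelli argument applied to the resulting super-exponential tails of $F(\hat{\gamma}_k)$ then gives $F_\alpha(\hat{\gamma})<\infty$ $\mu$-a.s.\ for every $\alpha>0$, i.e.\ $\mu$ is tempered.

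\textbf{Uniqueness at small $\phi$ and $z$.} For (ii), take any $\mu_1,\mu_2\in\mathcal{G}^{\mathrm{t}}(\Gamma(X,S))$ and use the moment bound \eqref{moment-est}, valid for \emph{every} tempered Gibbs measure by (i), as uniform tail control on both. Apply the DLR equation on growing boxes $\Delta_n$ and estimate the total variation between $\Pi_{\Delta_n}(\cdot|\hat{\gamma}_1)$ and $\Pi_{\Delta_n}(\cdot|\hat{\gamma}_2)$ in terms of the boundary configurations. Smallness of $z\leq z_0$ makes the single-particle term dominate in the Lebesgue--Poisson expansion \eqref{LPM} and hence keeps the local particle count small in mean; smallness of $\phi_0$ ensures that, when restricted to typical tempered configurations, the spin-spin coupling contribution to the relative density $d\Pi_{\Delta_n}(\cdot|\hat{\gamma}_1)/d\Pi_{\Delta_n}(\cdot|\hat{\gamma}_2)$ is small. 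A Dobrushin-type contraction bootstrap over nested $\Delta_n\nearrow X$ then gives $|\mu_1(F)-\mu_2(F)|\to 0$ for every local $F$, forcing $\mu_1=\mu_2$.

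\textbf{Main obstacle.} The principal difficulty is (ii). Classical Dobrushin contraction is tailored to lattice systems with compact local state space, while here the continuum positions make the number of interacting neighbours random and the unbounded spins allow arbitrarily large local couplings $\phi(x-y)\sigma_x\sigma_y$ even when $\phi\leq\phi_0$ is small. The resolution is to use the moment bound of (i) to quarantine a small-probability set of "bad" tempered configurations, and to run the contraction on the complement; the thresholds $\phi_0,z_0$ must be chosen small enough to balance against the fluctuation scales enforced by \eqref{sss} and \eqref{s-quad}, which is the delicate point.
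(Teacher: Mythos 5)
This theorem is imported from \cite{DKPP}: the paper does not prove it, but only sketches part (i) in Subsection \ref{SS3.3}, namely by invoking Propositions \ref{equicont2} and \ref{equicont1} to extract an $\mathfrak{L}$-accumulation point of $\{\Pi_{\Lambda_m}(\cdot|\hat{\gamma})\}$ along a subsequence as in (\ref{lim-measure}), and then passing to the limit in the DLR equation (\ref{DLR}) and in the moment estimate (\ref{moment-est}). Your treatment of (i) is essentially this same argument, and your sketch of how (\ref{moment-est}) is obtained --- superstability (\ref{sss}) absorbing $\alpha[N(\eta_k)]^{v}$, the exponential moment (\ref{s-quad}) with $u>w$ absorbing $\alpha|\sigma_x|^{w}$, and Young's inequality on the cross terms --- is consistent with the role of conditions (\ref{pq}) and Remark \ref{Phirk}, which are calibrated exactly for that domination. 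One small correction: the limit transition in (\ref{DLR}) under $\mathfrak{L}$-convergence is normally justified by quasilocality of $\hat{\gamma}\mapsto\Pi_\Delta(B|\hat{\gamma})$ up to boundary terms controlled uniformly by (\ref{moment-est}), rather than by $\mu$-a.s.\ continuity of the kernels. For (ii) the paper offers nothing beyond the citation, so there is no in-paper argument to compare against; your Dobrushin-type contraction outline is the standard route for small $z$ and small $\phi$, but as written it is programmatic: the decisive quantitative step --- verifying that the contraction coefficient is actually below $1$ uniformly over the non-compact spin space and the random neighbourhood structure, after quarantining the bad tempered configurations --- is asserted rather than derived, and that is precisely the content delegated to \cite{DKPP}.
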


In order to fix certain notations, we give a sketch of the proof of (i)%
\textit{.} It follows from Proposition \ref{equicont2} that, for any $\hat{%
\gamma}\in \Gamma ^{\mathrm{t}}(X,S)$, the sequence $\left\{ \Pi
_{n}\right\} _{n\in \mathbb{N}}$ has an accumulation point $\mu ^{\hat{\gamma%
}}\in \mathcal{P}(\Gamma (X,S))$, so that there exists a subsequence $%
\Lambda _{n_{j}},\ j\in \mathbb{N}$ such that
\begin{equation}
\mu ^{\hat{\gamma}}(B)=\underset{j\rightarrow \infty }{\mathrm{lim}}~\Pi
_{\Lambda _{n_{j}}}\left( B\left\vert \hat{\gamma}\right. \right) ,
\label{lim-measure}
\end{equation}%
holding for any $B\in \mathcal{B}_{0}(\Gamma (X,S))$. Standard limit
transition arguments show that $\mu ^{\hat{\gamma}}$ satisfies (\ref{DLR})
and the estimate in (\ref{moment-est}).

\section{Proof of the main result}

\label{Sphase}

\subsection{Proof of Theorem \protect\ref{phase}}

From now on we fix the value of $l$ in (\ref{Qkl}) by setting
\begin{equation}
l=R/2\sqrt{d},  \label{lk1}
\end{equation}%
where $R$ is as in (\ref{phi-star1}). Then by $\mathcal{L}\subset \mathcal{B}%
_{0}(X)$ we denote the family of finite unions of the cells defined in (\ref%
{Qkl}) such that $\Xi _{0}$ is contained in each $\Lambda \in \mathcal{L}$.
Next, for $n_{\ast }\in \mathbb{N}$ and $a>0$, we define the sets $\widehat{%
\Gamma }_{\pm }(n_{\ast },a)\subset \Gamma ^{\mathrm{t}}(X,S)$ as consisting
of all those $\hat{\gamma}=(\gamma ,\sigma _{\gamma })$ that satisfy the
following two conditions:
\begin{equation}
(a)\ \forall k\in \mathbb{Z}^{d}\quad N(\gamma _{k})\geq n_{\ast };\qquad
(b)\ \forall x\in \gamma \quad \sigma _{x}=\pm a.  \label{1}
\end{equation}%
In view of (\ref{Fka}) and (\ref{temper0}), each $\hat{\gamma}\in \widehat{%
\Gamma }_{\pm }(n_{\ast },a)$ should have the property: for every $\alpha >0$%
, there exists $N_{\alpha }>0$ such that
\begin{equation*}
\forall k\in \mathbb{Z}^{d}\quad N(\gamma _{k})\leq N_{\alpha }e^{\alpha
|k|},
\end{equation*}%
i.e., $\gamma $ should be in $\Gamma ^{\mathrm{t}}(X)$. Now we set
\begin{equation}
M(\hat{\gamma})=\sum_{x\in \gamma _{0}}\sigma _{x},\qquad \hat{\gamma}\in
\Gamma ^{\mathrm{t}}(X,S).  \label{2}
\end{equation}%
The map $\Gamma (X,S)\ni \hat{\gamma}\mapsto M(\hat{\gamma})$ is clearly
measurable, cf. (\ref{top}). The proof of Theorem \ref{phase} is based on
the following result, which will be gradually proved in the remaining part
of the paper.

\begin{lemma}
\label{1lm} Under the assumptions of Theorem \ref{phase}, there exist $%
z_{c}>0$, $n_{\ast }\in \mathbb{N}$ and positive constants $a$ and $m_{c}$
such
\begin{equation}
\int_{\Gamma (X,S)}M(\hat{\gamma})\Pi _{\Lambda }(d\hat{\gamma}|\hat{\xi}%
)\geq m_{c}.  \label{3}
\end{equation}%
for any $z>z_{c}$, $\hat{\xi}\in \widehat{\Gamma }_{+}(n_{\ast },a)$ and $%
\Lambda \in \mathcal{L}$.
\end{lemma}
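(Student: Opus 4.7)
The plan is to exploit the fibre-bundle disintegration of $\Pi_\Lambda(\cdot|\hat{\xi})$, cf.~(\ref{di}) and (\ref{di1}), in order to split the computation of~(\ref{3}) into a spin average on a graph of positions and a position average. Reading off the Radon--Nikodym factorization of~(\ref{spec}) one obtains
\[
\Pi_\Lambda(d\hat{\eta}|\hat{\xi}) \;=\; \omega_{\eta,\hat{\xi}}(d\sigma_\eta)\,\pi_\Lambda(d\eta|\hat{\xi}),
\]
where, for each configuration $\eta$ of positions inside $\Lambda$, $\omega_{\eta,\hat{\xi}}$ is the ferromagnetic Gibbs measure on $S^\eta$ with single-spin law $\chi$, couplings $J_{xy}=\phi(x-y)$, and the $+a$ boundary spins inherited from $\hat{\xi}\in\widehat{\Gamma}_+(n_\ast,a)$, while $\pi_\Lambda(\cdot|\hat{\xi})$ is the resulting marginal on positions. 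Since $\Xi_0\subset\Lambda$ for every $\Lambda\in\mathcal{L}$, this yields
\[
\int M(\hat{\gamma})\,\Pi_\Lambda(d\hat{\gamma}|\hat{\xi}) \;=\; \int_{\Gamma(\Lambda)}\Bigl(\sum_{x\in\eta\cap\Xi_0}\langle\sigma_x\rangle_{\omega_{\eta,\hat{\xi}}}\Bigr)\,\pi_\Lambda(d\eta|\hat{\xi}),
\]
so it suffices to bound the integrand from below by a nonnegative quantity whose $\pi_\Lambda$-average is uniformly positive.

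The inner spin expectation is controlled by a two-step comparison. First, a version of Wells' inequality~\cite{W}, adapted to the unbounded but symmetric distribution $\chi$ via Assumption~(M)(5) (symmetry of $\chi$, the exponential moment~(\ref{s-quad}), and $\chi(\{0\})<1$), produces a constant $a>0$, independent of $\eta$, $\Lambda$ and $\hat{\xi}$, such that
\[
\langle\sigma_x\rangle_{\omega_{\eta,\hat{\xi}}} \;\geq\; a\,\langle s_x\rangle^{+}_{G_\eta,\Lambda},
\]
where the right-hand side is the magnetization of the $\pm 1$ Ising model with couplings $a^2\phi(x-y)$ and plus boundary condition on the random geometric graph $G_\eta$ with vertex set $\eta\cup\xi_{\Lambda^c}$ and edges joining pairs at distance $\leq R$. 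Second, Häggström's comparison~\cite{H} relates the Ising magnetization to Bernoulli bond percolation on the same graph,
\[
\langle s_x\rangle^{+}_{G_\eta,\Lambda} \;\geq\; \mathbb{P}^{\mathrm{perc}}_{G_\eta,\Lambda}\bigl(x\leftrightarrow \xi_{\Lambda^c}\bigr),
\]
with independent open bonds of parameter $p_\ast := 1-\exp(-2a^2\phi_\ast)\in(0,1)$, where positivity of $\phi_\ast$ is supplied by~(M)(4).

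Substituting these two inequalities into the previous display reduces~(\ref{3}) to a bound of the form
\[
\int_{\Gamma(\Lambda)}\sum_{x\in\eta\cap\Xi_0}\mathbb{P}^{\mathrm{perc}}_{G_\eta,\Lambda}\bigl(x\leftrightarrow \xi_{\Lambda^c}\bigr)\,\pi_\Lambda(d\eta|\hat{\xi}) \;\geq\; \kappa
\]
for some $\kappa>0$, uniformly in $\Lambda\in\mathcal{L}$ and $\hat{\xi}\in\widehat{\Gamma}_+(n_\ast,a)$, provided $z>z_c$. This is exactly the content of Lemma~\ref{th-perc}; granting it, one concludes with $m_c:=a\kappa$.

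The hard part is thus Lemma~\ref{th-perc}. The position marginal $\pi_\Lambda(\cdot|\hat{\xi})$ is not Poisson: it is reweighted both by the superstable position Hamiltonian $H(\eta|\xi_{\Lambda^c})$ and by the spin partition function $z_\eta(\hat{\xi})$, and so the percolation in question is an \emph{annealed} bond process on a correlated random point set. The approach outlined in the Introduction, and executed in Section~\ref{sec-phase} by means of Lemmas~\ref{5lm}, \ref{6lm} and \ref{lemma1}, is to stochastically dominate this annealed process from below by an auxiliary percolation model whose supercriticality can be obtained by a coarse-graining argument once $z$ is large enough; the condition~(a) in~(\ref{1}) supplies the density of $+a$-aligned boundary points needed to serve as anchors for connection, while the superstability estimate~(\ref{sss}) and the moment bound~(\ref{moment-est}) control the tails of the position weights, making the coarse-grained comparison uniform in $\hat{\xi}\in\widehat{\Gamma}_+(n_\ast,a)$ and $\Lambda\in\mathcal{L}$.
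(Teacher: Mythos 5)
Your proposal follows essentially the same route as the paper: disintegrate $\Pi_\Lambda(\cdot|\hat{\xi})$ into position marginal and spin conditional, then chain Wells' inequality (to the rescaled $\pm a$ two-point spin law $\chi^a$), GKS monotonicity (implicitly, to reduce the coupling $\phi$ on edges of the geometric graph $G_\eta$ to its uniform lower bound $\phi_*I_R$), and Häggström's percolation comparison, finally reducing to Lemma \ref{th-perc}, which is where the paper locates the hard work. The only divergences are cosmetic and do not affect validity: you leave the GKS step unstated, whereas the paper makes it explicit via $\tilde\phi=\phi_*I_R$ and the comparison (\ref{di5}); and you quote the FK parameter $p_*=1-e^{-2a^2\phi_*}$, while the paper works with a strictly smaller thinning parameter $q$ satisfying $\phi_*>\tfrac{a^2}{2}\log\tfrac{1+q}{1-q}$, i.e.\ $q<\tanh(a^2\phi_*)$, which is what \cite[Lemma 4.2]{H} actually delivers—any fixed $q\in(0,1)$ is admissible in Lemma \ref{th-perc}, so the conclusion is unchanged.
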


Note that $M(\hat{\gamma})$ clearly is $\Pi _{\Lambda }(\cdot |\hat{\xi})$%
-integrable for each $\hat{\xi}\in \Gamma ^{\mathrm{t}}(X,S)$.

\noindent \textit{Proof of Theorem \ref{phase}.} Given $\hat{\xi}\in
\widehat{\Gamma }_{+}(n_{\ast },a)$, let $\hat{\xi}^{-}\in \widehat{\Gamma }%
_{-}(n_{\ast },a)$ be such that $p_{X}(\hat{\xi})=p_{X}(\hat{\xi}^{-})$. By (%
\ref{spec}) we then get
\begin{equation}
\int_{\Gamma (X,S)}M(\hat{\gamma})\Pi _{\Lambda }(d\hat{\gamma}|\hat{\xi}%
)=-\int_{\Gamma (X,S)}M(\hat{\gamma})\Pi _{\Lambda }(d\hat{\gamma}|\hat{\xi}%
^{-}).  \label{4}
\end{equation}%
For $n\in \mathbb{N}$, let $\Lambda _{n}$ be the union of all $\Xi _{k}$
with $|k|\leq n$. For such $\hat{\xi}$ and $\hat{\xi}^{-}$, both sequences $%
\{\Pi _{\Lambda _{n}}(\cdot |\hat{\xi})\}$ and $\{\Pi _{\Lambda _{n}}(\cdot |%
\hat{\xi}^{-})\}$ are relatively compact in the $\mathfrak{L}$-topology.
Thus, one can pick the subsequence $n_{j}$, $j\in \mathbb{N}$, such that the
following holds:
\begin{equation*}
\Pi _{\Lambda _{n_{j}}}(\cdot |\hat{\xi})\rightarrow \mu ^{\hat{\xi}},\qquad
\Pi _{\Lambda _{n_{j}}}(\cdot |\hat{\xi})\rightarrow \mu ^{\hat{\xi}%
^{-}},\quad j\rightarrow +\infty ,
\end{equation*}
see Propositions \ref{equicont1}, \ref{equicont2} and formula (\ref%
{lim-measure}). As in the proof of Theorem \ref{Atm}, his convergence yields
that both $\mu ^{\hat{\xi}}$ and $\mu ^{\hat{\xi}^{-}}$ belong to $\mathcal{G%
}^{\mathrm{t}}(\Gamma (X,S))$. At the same time, by means of (\ref%
{moment-est}), Lemma \ref{1lm} and standard limit transition arguments, we
conclude from (\ref{4}) that $\mu ^{\hat{\xi}}\neq \mu ^{\hat{\xi}^{-}}$,
and the result follows.
\hfill%
$\square $

\subsection{Proof of Lemma \protect\ref{1lm}}

\label{PLm1} Given $\Lambda \in \mathcal{B}_0(X)$ and $\hat{\xi}\in \Gamma^{%
\mathrm{t}}(X,S)$, we set $P_\Lambda^{\hat{\xi}} = p^*_X \Pi_\Lambda (\cdot|%
\hat{\xi})$. Then, cf. (\ref{di}) and (\ref{di1}),
\begin{equation}  \label{di2}
\Pi_\Lambda (d\hat{\gamma}|\hat{\xi}) = \pi_{\Lambda, \gamma}^{\hat{\xi}} (d
\sigma_\gamma) P_\Lambda^{\hat{\xi}} ( d \gamma).
\end{equation}
Here
\begin{eqnarray}  \label{di3}
\pi_{\Lambda, \gamma}^{\hat{\xi}} (d \sigma_\gamma) & = & \frac{1}{Q_\Lambda
(\hat{\xi}_{\Lambda^c})} \exp\bigg{(} -
E(\sigma_{\gamma_{\Lambda}}|\sigma_{\xi_{\Lambda^c}})\bigg{)}%
\chi_{\gamma_\Lambda}( d \sigma_{\gamma_{\Lambda}}) \otimes
\delta_{\sigma_{\gamma_{\Lambda^c}}} ( d \sigma_{\gamma_{\Lambda^c}}),
\notag \\[.2cm]
Q_\Lambda (\hat{\xi}_{\Lambda^c}) & = & \int_{S^{\gamma_\Lambda}} \exp%
\bigg{(} - E(\sigma_{\gamma_{\Lambda}}|\sigma_{\xi_{\Lambda^c}})\bigg{)}%
\chi_{\gamma_\Lambda}( d \sigma_{\gamma_{\Lambda}}),
\end{eqnarray}
and
\begin{equation}  \label{di3A}
P_\Lambda^{\hat{\xi}} ( d \gamma) = \frac{Q_\Lambda (\hat{\xi}_{\Lambda^c})}{%
Z_\Lambda(\hat{\xi})} \exp\bigg{(}-H(\gamma_\Lambda|\xi_{\Lambda^c}) \bigg{)}%
\lambda_z (d \gamma_\Lambda)\otimes \delta_{\xi_{\Lambda^c}}(d
\gamma_{\Lambda^c}),
\end{equation}
where $\delta_{\cdot}$ is the corresponding Dirac measure.

Among all those $\chi $ that satisfy (\ref{s-quad}) we distinguish the
measure $\chi ^{a}(d\sigma )=[\delta _{-a}(d\sigma )+\delta _{a}(d\sigma
)]/2 $, $a>0$. This choice corresponds to the Ising model with rescaled
spins. It will be used as a reference system. Let $\pi _{\Lambda ,\gamma
}^{a,\hat{\xi}}$ be as in (\ref{di3}) with this $\chi ^{a}$ on the
right-hand side. Next, we let $\tilde{\phi}(x)=\phi _{\ast }I_{R}(x)$, where
$I_{R}$ is the indicator of the ball $B_{R}=\{x\in X:|x|\leq R\}$ and $\phi
_{\ast }$ is as in (\ref{phi-star1}). Finally, by $\tilde{\pi}_{\Lambda
,\gamma }^{a,\hat{\xi}}$ we denote the measure as in (\ref{di3}) with $\chi
^{a}$ and with $\phi $ replaced in (\ref{H}) and (\ref{cond-en}) by $\tilde{%
\phi}$.

The proof of (\ref{3}) is based on the following statement which we prove in
the next section.

\begin{lemma}
\label{2lm} For any $a>0$ there exist $n_{\ast }\in \mathbb{N}$, $z_{c}>0$,
a constant $\theta \in (0,1/2)$ and a family of sets $\Gamma _{\Lambda }(%
\hat{\xi})\in \mathcal{B}(\Gamma (X))$, $\Lambda \in \mathcal{L}$, $\hat{\xi}%
\in \widehat{\Gamma }(n_{\ast },a)$, with the property
\begin{equation}
P_{\Lambda }^{\hat{\xi}}(\Gamma _{\Lambda }(\hat{\xi}))\geq \theta
\label{di3a}
\end{equation}%
and such that $\gamma _{0}\neq \emptyset $ and%
\begin{equation}
\tilde{\pi}_{\Lambda ,\gamma }^{a,\hat{\xi}}(\sigma _{x}=a)\geq \frac{%
1+\theta }{2},\ x\in \gamma _{0},  \label{di4}
\end{equation}%
for all $\gamma \in \Gamma _{\Lambda }(\hat{\xi})$ and $z>z_{c}$.
\end{lemma}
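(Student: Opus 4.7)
The plan is to recognize $\tilde\pi^{a,\hat\xi}_{\Lambda,\gamma}$, for each fixed $\gamma$, as a plus-boundary Ising model with spins in $\{\pm a\}$ on the random geometric graph $G_R(\gamma\cup\xi)$ carried by the pairs at distance $\leq R$, and then translate the magnetization estimate (\ref{di4}) into a Bernoulli bond percolation estimate, which Lemma \ref{th-perc} will supply on the average under $P_\Lambda^{\hat\xi}$. The role of $P_\Lambda^{\hat\xi}$ is then confined to averaging over the underlying random graph, decoupling the spin and positional aspects of the problem.

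First I would apply the Edwards--Sokal / Fortuin--Kasteleyn random-cluster representation. Setting $J:=\phi_\ast a^2$ and $\tau_x=\sigma_x/a\in\{\pm1\}$, the weight in (\ref{di3}) with $\phi$ replaced by $\tilde\phi=\phi_\ast I_R$ factors over pairs $\{x,y\}\subset\gamma\cup\xi$ with $0<|x-y|\leq R$; the elementary identity $e^{J\tau_x\tau_y}=e^{-J}[1+(e^{2J}-1)\mathbf 1_{\tau_x=\tau_y}]$ expands it into a joint law on spins and bonds $\omega\in\{0,1\}^{E_R(\gamma\cup\xi)}$ whose spin-marginal is $\tilde\pi^{a,\hat\xi}_{\Lambda,\gamma}$ and whose bond-marginal is the $q=2$ random-cluster measure $\mathbb P^{\mathrm{FK}}_{\gamma,\xi}$ with edge parameter $p:=1-e^{-2\phi_\ast a^2}$ and wired boundary on $\xi_{\Lambda^c}$ (forced by $\sigma_y=+a$, $y\in\xi_{\Lambda^c}$). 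In this coupling all spins in an $\omega$-cluster coincide; clusters meeting $\xi_{\Lambda^c}$ are pinned to $+a$ while the rest are unbiased, giving
\[
\tilde\pi^{a,\hat\xi}_{\Lambda,\gamma}(\sigma_x=a)=\tfrac{1}{2}\bigl[1+\mathbb P^{\mathrm{FK}}_{\gamma,\xi}(x\leftrightarrow\xi_{\Lambda^c})\bigr],\qquad x\in\gamma.
\]
Fortuin's domination inequality for random-cluster measures with $q\geq 1$ then yields $\mathbb P^{\mathrm{FK}}_{\gamma,\xi}(x\leftrightarrow\xi_{\Lambda^c})\geq \mathbb P^{p}_{\gamma,\xi}(x\leftrightarrow\xi_{\Lambda^c})$, where $\mathbb P^{p}_{\gamma,\xi}$ is Bernoulli bond percolation on $E_R(\gamma\cup\xi)$ with retention probability $p$. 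Thus (\ref{di4}) reduces to the pointwise lower bound $\mathbb P^{p}_{\gamma,\xi}(x\leftrightarrow\xi_{\Lambda^c})\geq\theta$ for every $x\in\gamma_0$.

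Next I would define
\[
\Gamma_\Lambda(\hat\xi):=\Bigl\{\gamma\in\Gamma(X):\gamma_0\neq\emptyset\ \text{and}\ \min_{x\in\gamma_0}\mathbb P^{p}_{\gamma,\xi}(x\leftrightarrow\xi_{\Lambda^c})\geq\theta\Bigr\},
\]
which is measurable in $\mathcal B(\Gamma(X))$ and implies (\ref{di4}) by the previous step. The bound $P_\Lambda^{\hat\xi}(\Gamma_\Lambda(\hat\xi))\geq\theta$ is then extracted from Lemma \ref{th-perc}, which I would use as the statement that for $n_\ast$ and $z_c$ large enough there is a constant $2\theta>0$ with $\int\mathbb P^{p}_{\gamma,\xi}(\text{origin cell}\leftrightarrow\xi_{\Lambda^c})\,P_\Lambda^{\hat\xi}(d\gamma)\geq 2\theta$; a reverse Markov inequality applied to the bounded $[0,1]$-valued integrand then produces a $P_\Lambda^{\hat\xi}$-set of mass $\geq\theta$ on which the conditional connectivity is $\geq\theta$. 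To pass from a single representative vertex to \emph{every} $x\in\gamma_0$, I would exploit the calibration $l=R/2\sqrt d$ of (\ref{lk1}): any two points of $\gamma_0$ lie within distance $l\sqrt d=R/2<R$ and are therefore joined by an edge of $E_R$; combining FKG for Bernoulli percolation with the exponential moment estimate on $N(\gamma_0)$ inherited from Theorem \ref{Atm}(i) transfers the connectivity of one chosen representative to that of the entire $\gamma_0$, at the cost of absorbing a factor $p^{N(\gamma_0)-1}$ into the constants.

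The algebraic Steps built on Edwards--Sokal and FKG are standard; the real obstacle is the annealed-percolation input of Lemma \ref{th-perc}. Because $P_\Lambda^{\hat\xi}$ is a genuine Gibbs point process (not Poisson), with superstable weight $e^{-H(\gamma_\Lambda|\xi_{\Lambda^c})}$ and a non-trivial plus-boundary condition $\xi\in\widehat\Gamma_+(n_\ast,a)$, the continuum percolation arguments of \cite{GeHa} must be adapted to this non-Poissonian setting. This is precisely where the choice of $n_\ast$ (forcing the boundary to be dense enough to act as the "infinite cluster" that percolation paths can attach to) and of $z_c$ (forcing the bulk activity past the effective continuum percolation threshold so that connections across $\Lambda$ are typical) are pinned down, and it is the content of the subsequent Section \ref{sec-phase} rather than of Lemma \ref{2lm} itself.
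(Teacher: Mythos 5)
Your overall strategy coincides with the paper's: couple $\tilde{\pi}_{\Lambda ,\gamma }^{a,\hat{\xi}}$ to a bond percolation on the geometric graph $(\gamma)_R$ (the paper simply cites \cite[Lemma 4.2]{H} for the magnetization--percolation bound you re-derive via Edwards--Sokal), define $\Gamma_\Lambda(\hat{\xi})$ as the set of $\gamma$ on which the quenched percolation probability exceeds $\theta$, and extract (\ref{di3a}) from the annealed bound (\ref{perc0}) of Lemma \ref{th-perc} by the reverse Markov argument. Two of your steps, however, do not work as written. First, the comparison constant: the $q=2$ random-cluster measure with edge parameter $p=1-e^{-2J}$, $J=\phi_\ast a^2$, stochastically dominates Bernoulli bond percolation with parameter $p/(p+2(1-p))=\tanh J$, \emph{not} with parameter $p$ itself. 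This is exactly what the paper's condition $\phi _{\ast }>\frac{a^{2}}{2}\log \frac{1+q}{1-q}$ is calibrating: the thinning parameter $q$ fed into Lemma \ref{th-perc} must lie below the dominated Bernoulli parameter. Your proposal never reconciles the $p$ of your FK representation with the $q$ of Lemma \ref{th-perc}, and with the uncorrected $p$ the domination claim is false.

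Second, and more seriously, your upgrade from one representative vertex to all of $\gamma_0$ fails: $N(\gamma_0)$ is unbounded under $P_{\Lambda }^{\hat{\xi}}$, so a factor $p^{N(\gamma_0)-1}$ cannot be ``absorbed into the constants'' uniformly, and the moment bound (\ref{moment-est}) cannot rescue an estimate that must hold pointwise on a set of $P_{\Lambda }^{\hat{\xi}}$-measure at least $\theta$ with a $\gamma$-independent $\theta$. The gap is unnecessary: the event in (\ref{perc0}) already contains the quantifier ``for all $x\in \gamma _{0}$,'' so defining $\Gamma _{\Lambda }(\hat{\xi})=\{\gamma :\varpi _{\gamma }^{q}(\{\varepsilon :x\leftrightarrow \infty \ \text{for all }x\in \gamma _{0}\})\geq \theta \}$, as the paper does, delivers (\ref{di4}) for every $x\in \gamma _{0}$ simultaneously with no further argument. (A minor additional discrepancy: you phrase the percolation event as connection to $\xi _{\Lambda ^{c}}$ rather than to infinity; since $\gamma _{\Lambda }$ is finite these essentially agree, but you should say so rather than silently switching between them when invoking Lemma \ref{th-perc}.)
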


\noindent \textit{Proof of Lemma \ref{1lm}.} By Lemma \ref{2lm} it follows
that $\gamma _{0}\neq \emptyset $ for each $\gamma \in \Gamma _{\Lambda }(%
\hat{\xi})$. For an arbitrary such $\gamma $, we have:
\begin{equation}
\int_{S^{\gamma }}\left( \sum_{x\in \gamma _{0}}\sigma _{x}\right) \pi
_{\Lambda ,\gamma }^{a,\hat{\xi}}(d\sigma _{\gamma })\geq \int_{S^{\gamma
}}\left( \sum_{x\in \gamma _{0}}\sigma _{x}\right) \tilde{\pi}_{\Lambda
,\gamma }^{a,\hat{\xi}}(d\sigma _{\gamma }),  \label{di5}
\end{equation}%
following by the GKS inequalities, see \cite{Simon}. Now we pass to
unbounded spins and take any $\chi $, which is symmetric and satisfies (\ref%
{s-quad}). For this $\chi $ we pick $a>0$ such that
\begin{equation*}
\chi ([a\sqrt{2},+\infty ))\geq \chi ([0,a]).
\end{equation*}%
By Wells' inequality \cite{W}, for this $a$ we have
\begin{equation}
\int_{S^{\gamma }}\left( \sum_{x\in \gamma _{0}}\sigma _{x}\right) \pi
_{\Lambda ,\gamma }^{\hat{\xi}}(d\sigma _{\gamma })\geq \int_{S^{\gamma
}}\left( \sum_{x\in \gamma _{0}}\sigma _{x}\right) \pi _{\Lambda ,\gamma
}^{a,\hat{\xi}}(d\sigma _{\gamma })\geq a\theta ,  \label{di6}
\end{equation}%
see \cite{DKKP} for more detail. The latter estimate in (\ref{di6}) follows
by (\ref{di4}) and (\ref{di5}). Now by (\ref{di2}) we integrate the
left-hand side of (\ref{di6}), take into account (\ref{2}) and (\ref{di3a}),
and obtain (\ref{3}) with $m_{c}=a\theta ^{2}/2$.
\hfill%
$\square $

\subsection{Proof of Lemma \protect\ref{2lm}}

\label{PLm2}

The asymmetry stated in (\ref{di4}) can be established by using its
relationship to the Bernoulli bond percolation in the random geometric graph
$(\gamma )_{R}$, which we introduce now. Given a configuration $\gamma \in
\Gamma _{0}(X)$, the vertex set of the graph is set to be $\gamma $. The
edge set is then defined by setting the adjacency relation: $x\sim y$
whenever $|x-y|\leq R$. That is, $(\gamma )_{R}=(\gamma ,\varepsilon
_{\gamma })$, $\varepsilon _{\gamma }=\{\{x,y\}\subset \gamma :\left\vert
x-y\right\vert \leq R\}$. The corresponding probability distribution is
introduced as follows, see \cite{DKKP}. Let $X^{(2)}\ $ be the space of
two-element subsets of $X$ and $E:=\Gamma (X^{(2)})$ (cf. Remark \ref%
{rem-conf}), so that $\varepsilon _{\gamma }\in E$ for any $\gamma \in
\Gamma (X)$. Each $\varpi \in \mathcal{P}(E)$ can be characterized by its
Laplace transform
\begin{equation*}
L_{\varpi }(\kappa ):=\int_{E}\exp \left[ \sum_{\{x,y\}\in \varepsilon }%
\mathrm{log}\left( 1+\kappa (x,y)\right) \right] \varpi (d\varepsilon ),
\end{equation*}%
where $\kappa $ runs over the set $\mathcal{K}$ of all measurable symmetric
functions $X\times X\rightarrow \left( -1,0\right] $. For a given $\gamma
\in \Gamma (X)$, let $\varpi _{\gamma }\in \mathcal{P}(E)$ be the Dirac
measure concentrated at $\varepsilon _{\gamma }$. Its Laplace transform is
then
\begin{equation*}
L_{\varpi _{\gamma }}(\kappa )=\exp \left[ \sum_{\{x,y\}\in \varepsilon
_{\gamma }}\mathrm{log}\left( 1+I_{R}(x-y)\kappa (x,y)\right) \right] ,
\end{equation*}%
where, as above, $I_{R}$ is the indicator of the ball $B_{R}$. For a given $%
q\in \lbrack 0,1]$, the independent \textit{$q$-thinning} of $\varpi
_{\gamma }$ is the measure $\varpi _{\gamma }^{q}\in \mathcal{P}(E)$, cf.
\cite[Section 11.2]{DV}, defined by the relation
\begin{equation}
L_{^{\varpi _{\gamma }^{q}}}(\kappa )=L_{\varpi _{\gamma }}(q\kappa ).
\label{Laplacee}
\end{equation}%
Note that $q\kappa \in \mathcal{K}$. The interpretation of this is that each
$\{x,y\}\in \varepsilon $ is removed from the edge configuration with
probability $1-q$ and is kept with probability $q$. The probability
distribution of such `thinned' configurations is then $\varpi _{\gamma }^{q}$%
. Now let $\Lambda $ and $\hat{\xi}$ be as in the statement of Lemma \ref%
{2lm}, and then $P_{\Lambda }^{\hat{\xi}}$ be as in (\ref{di2}) and (\ref%
{di3a}). For $\varpi _{\gamma }$ and $\varpi _{\gamma }^{q}$ as in (\ref%
{Laplacee}), we define
\begin{equation}
\zeta (d\gamma ,d\varepsilon ):=\varpi _{\gamma }(d\varepsilon )P_{\Lambda
}^{\hat{\xi}}(d\gamma ),\quad \zeta ^{q}(d\gamma ,d\varepsilon ):=\varpi
_{\gamma }^{q}(d\varepsilon )P_{\Lambda }^{\hat{\xi}}(d\gamma ).
\label{q-thinning}
\end{equation}

Let $x\leftrightarrow \infty $ denote the event that $x\in \mathcal{\gamma }$
belongs to an infinite connected component of $(\gamma ,\varepsilon _{\gamma
})$. The proof of Lemma \ref{2lm} is based on the following result proved in
Section \ref{sec-phase}.

\begin{lemma}
\label{th-perc} For any $q\in (0,1)$ and $a>0$ there exist $z_{c}>0$ and $%
n_{\ast }\in \mathbb{N}$ such that the bound
\begin{equation}
\zeta ^{q}\left( \left\{ (\gamma ,\varepsilon ):x\leftrightarrow \infty \
\text{for all }x\in \gamma _{0}\right\} \right) \geq 2\theta  \label{perc0}
\end{equation}%
holds for all $z>z_{c}$, $\Lambda \in \mathcal{L}$, $\hat{\xi}\in \widehat{%
\Gamma }(n_{\ast },a)$ and some constant $\theta \in (0,1/2)$, which depends
only on the dimension of $X$.
\end{lemma}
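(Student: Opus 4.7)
I would follow the renormalization scheme of Gruber–Haller and compare the random process of ``good cells'' with supercritical Bernoulli site percolation on $\mathbb{Z}^d$, along the lines sketched in the paper's overview via Lemmas \ref{5lm}, \ref{6lm}, \ref{lemma1}. The choice $l = R/(2\sqrt d)$ in (\ref{lk1}) guarantees that any two points $x \in \Xi_k$ and $y \in \Xi_{k'}$ with $|k - k'|_\infty \leq 1$ satisfy $|x-y| \leq R$, so that the pair $\{x,y\}$ is an edge of $(\gamma)_R$. This is the geometric fact that drives the whole argument, and is also why $r < R/4$ in (\ref{pq-rel}) can be used later to decouple the position-position interaction from inter-cell edges.

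\textbf{Step 1 (Good blocks and edge survival).} Call a block $k \in \mathbb{Z}^d$ \emph{populated} for the configuration $\gamma$ if $N(\gamma_k) \geq n_\ast$, and call it \emph{good} for $(\gamma,\varepsilon)$ if it is populated and, for each neighbour $k' \sim k$ with $k'$ populated, at least one edge $\{x,y\}$ with $x \in \gamma_k$, $y \in \gamma_{k'}$ survives the $q$-thinning. Conditional on $\gamma$, each of the $\geq n_\ast^2$ candidate cross-edges is independently kept with probability $q$, so that the conditional probability of good-given-populated is at least $1 - (2d)(1 - q)^{n_\ast^2}$, which tends to $1$ as $n_\ast \to \infty$.

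\textbf{Step 2 (Stochastic domination of populated cells).} I need to show that, under $P_\Lambda^{\hat\xi}$, the indicator field $(\mathbf{1}\{N(\gamma_k) \geq n_\ast\})_{k}$ stochastically dominates an i.i.d.\ Bernoulli field on $\mathbb{Z}^d$ with parameter $p(z,n_\ast)$ that can be brought arbitrarily close to $1$ by taking $z$ large. By (\ref{di3A}), $P_\Lambda^{\hat\xi}$ has density $Q_\Lambda(\hat\xi_{\Lambda^c})\exp(-H(\gamma_\Lambda|\xi_{\Lambda^c}))/Z_\Lambda(\hat\xi)$ with respect to $\lambda_z$. Following the plan in Lemmas \ref{5lm}, \ref{6lm}, \ref{lemma1}, I would bound from below the conditional probability, given $\gamma_{X \setminus \Xi_k}$ (including the boundary part coming from $\hat\xi$), of placing at least $n_\ast$ points in $\Xi_k$. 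This requires bounding \emph{from above} the ratio
\begin{equation*}
\frac{Q_\Lambda(\hat\xi_{\Lambda^c})\exp(-H(\gamma_\Lambda|\xi_{\Lambda^c}))}{Q_\Lambda(\hat\xi_{\Lambda^c})\exp(-H((\gamma_\Lambda\cup\eta)|\xi_{\Lambda^c}))}
\end{equation*}
when $\eta \subset \Xi_k$ contains many points, uniformly in the outside configuration. For the position-position part, the strong superstability bound (\ref{sss}) together with item (2) of Assumption (M) absorbs adding $n_\ast$ points at a cost that is $z$-independent and at worst $e^{c n_\ast^{v+\epsilon}}$; since $\lambda_z$ contributes a factor $z^{n_\ast}/n_\ast!$, choosing $z$ large overwhelms this cost. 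For the spin-induced factor, I would use that $r < R/4$ and the boundedness of $\phi$ from (\ref{phi-star1}) to control the Gaussian-like integrals in $Q_\Lambda$ uniformly, and the boundary assumption $\hat\xi \in \widehat\Gamma(n_\ast,a)$ ensures that the spins on $\xi_{\Lambda^c}$ are bounded by $a$, so that exponential factors stay under control. The Liggett–Schonmann–Stacey criterion then lifts this conditional bound to stochastic domination by a genuinely i.i.d.\ Bernoulli field of density $p$ arbitrarily close to $1$.

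\textbf{Step 3 (Percolation and conclusion).} Combining Steps 1 and 2, the field of good blocks stochastically dominates an i.i.d.\ Bernoulli site field on $\mathbb{Z}^d$ with parameter $p' = p \cdot (1 - 2d(1-q)^{n_\ast^2})$ which, for $d \geq 2$, lies above the site-percolation threshold once $z$ and $n_\ast$ are large enough. Standard site percolation then gives a constant $\theta' > 0$, depending only on $d$, such that the origin block $0$ belongs to an infinite cluster of good blocks with probability $\geq 2\theta'$. On this event $\gamma_0 \neq \emptyset$, and every $x \in \gamma_0$ is connected through a chain of surviving cross-edges to infinity in $(\gamma, \varepsilon)$. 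Setting $\theta := \theta'$ yields (\ref{perc0}).

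\textbf{Main obstacle.} The hard part is Step 2: the measure $P_\Lambda^{\hat\xi}$ is not a Poisson point process, because the spin partition function $Q_\Lambda(\hat\xi_{\Lambda^c})$ depends on $\gamma_\Lambda$ through $\gamma_{\Lambda}$ entering the energy $E(\sigma_{\gamma_\Lambda}|\sigma_{\xi_{\Lambda^c}})$. Controlling this dependence uniformly in $\Lambda$ and $\hat\xi$ is where the finite range $r < R/4$, the bounded-spin boundary class $\widehat\Gamma(n_\ast, a)$, and the strong superstability (\ref{sss}) must be used in concert; this is presumably the content of the auxiliary Lemmas \ref{5lm}, \ref{6lm}, \ref{lemma1} announced in the overview.
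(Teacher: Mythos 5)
Your plan follows the same renormalization-to-$\mathbb{Z}^d$-percolation strategy as the paper, and Steps 2--3 are structurally right, but there are two genuine gaps. First, in Step 1 your notion of a ``good'' block only asks that at least one cross-edge to each populated neighbour survive the $q$-thinning; you never require that the thinned subgraph of $(\gamma)_R$ restricted to $\gamma_k$ remain \emph{connected}. Without that, the chain $x\leftrightarrow\infty$ claimed in Step 3 can break: a cross-edge entering $\gamma_{k'}$ and a cross-edge leaving $\gamma_{k'}$ may attach to different thinned components of $\gamma_{k'}$. The paper inserts precisely this factor as $q_1=\varphi(n_*,q)$ in (\ref{NM1}), and the bound (\ref{di10}) from \cite{GeHa} is what makes $q_1\to 1$ as $n_*\to\infty$. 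This is easy to repair but is a real omission.

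Second, and more seriously, the way you propose to control the spin partition function in Step 2 cannot work as written. You argue that since $\hat\xi\in\widehat\Gamma(n_*,a)$ the spins on $\xi_{\Lambda^c}$ are bounded by $a$ so the exponential factors in the $Q$ terms of (\ref{di3}) stay under control; but the configuration one conditions on in (\ref{Hol1})--(\ref{int1}) also carries spins on $\eta_{\Lambda\setminus\Xi_k}$, which are $\chi$-distributed and therefore \emph{unbounded}, so a uniform upper bound on the ratio of $Q$'s is not available. The paper's key idea, which your plan does not have, is to avoid bounding $Q$ from above at all: by the symmetry of $\chi$, the single-particle spin integral $G(x,\hat\gamma)$ of (\ref{dib1}) is bounded \emph{below} by $1/2$ uniformly in $x$ and $\hat\gamma$ (Lemma~\ref{6lm}), so that inserting one marked point into $\Xi_k$ costs at least $e^{-c}/2$ on a region of positive volume, using (\ref{int}), (\ref{pq-rel}) and Markov's inequality; plugging this into the recursion $\omega_{n+1}\ge zt_*\omega_n$ of Lemma~\ref{lemma1} gives (\ref{zc1}), and the domination $Q\prec P$ then follows from Holley's criterion (Proposition~\ref{Holpn}), which is the right tool here rather than Liggett--Schonmann--Stacey (that theorem concerns $k$-dependence, not conditional lower bounds). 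Finally, your $\theta'$ depends on the site-percolation parameter $p'$, not only on $d$; to match the lemma's statement, take $n_*$ and $z$ large enough that $q_0q_1q_2$ is so close to $1$ that $\theta^{\mathrm{site}}(q_0q_1q_2;\mathsf{Z})\ge p^{\mathrm{site}}(d)$, as in (\ref{di9})--(\ref{A}), and set $\theta=p^{\mathrm{site}}(d)/2$.
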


\noindent \textit{Proof of Lemma \ref{2lm}.} Choose $q$ and $a$ such that
\begin{equation*}
\phi _{\ast }>\frac{a^{2}}{2}\log \frac{1+q}{1-q},
\end{equation*}%
and let $\theta $, $z_{c}$ and $n_{\ast }$ be as in Lemma \ref{th-perc}. Fix
arbitrary $\Lambda \in \mathcal{L}$ and $\hat{\xi}\in \widehat{\Gamma }%
(n_{\ast },a)$. Next, for a given $\gamma \in \Gamma (X)$, set
\begin{equation*}
\Psi (\gamma )=\varpi _{\gamma }^{q}\left( \left\{ \varepsilon
:x\leftrightarrow \infty \ \text{for all }x\in \gamma _{0}\right\} \right) .
\end{equation*}%
Define $\Gamma _{\Lambda }(\hat{\xi})=\{\gamma \in \Gamma (X):\Psi (\gamma
)\geq \theta \}$, where $\theta $ is as in (\ref{perc0}). Since $\Psi
(\gamma )\leq 1$, it follows from (\ref{perc0}) that $P_{\Lambda }^{\hat{\xi}%
}(\Gamma _{\Lambda }(\hat{\xi}))\geq \theta $, hence (\ref{di3a}) holds, and
\begin{equation*}
\varpi _{\gamma }^{q}\left( \left\{ \varepsilon :x\leftrightarrow \infty \
\text{for all }x\in \gamma _{0}\right\} \right) \geq \theta ,\quad \gamma
\in \Gamma _{\Lambda }(\hat{\xi}).
\end{equation*}%
Then (\ref{di4}) follows by \cite[Lemma 4.2]{H}.
\hfill%
$\square $

\section{Existence of the percolation}

\label{sec-phase}

Let $\mathsf{Z}=(\mathsf{V},\mathsf{E})$ be the graph with vertex set $%
\mathbb{Z}^{d}$ and the adjacency relation: $k_{1}\sim k_{2}$ whenever $%
|k_{1}-k_{2}|=1$. The main idea of the proof of Lemma \ref{th-perc} is to
construct an auxiliary model on $\mathsf{Z}$ such that the percolation
therein implies (\ref{perc0}).

\subsection{The auxiliary percolation model}

\label{SSq}In this subsection, we fix $\Lambda \in \mathcal{L}$, $n_{\ast
}\in \mathbb{N}$, $a>0$, and $\hat{\xi}\in \widehat{\Gamma }(n_{\ast },a)$.

By $\mathsf{L}\subset \mathsf{V}$ we denote the set of all those $k$ for
which $\Xi _{k}\subset \Lambda $. Next, we introduce two systems of random
variables associated with the graph $(\gamma )_{R}$. Let $\vartheta _{k}$
take value $1$ if the subgraph of $(\gamma )_{R}$ generated by $\gamma _{k}$
is connected and $N(\gamma _{k})\geq n_{\ast }$, and take value 0 otherwise.
For $k_{1}\sim k_{2}$, let $\varsigma _{k_{1}k_{2}}$ take value $1$ if there
exist $x\in \gamma _{k_{1}}$ and $y\in \gamma _{k_{2}}$ such that $x\sim y$
in $(\gamma )_{R}$, and take value $0$ otherwise. Clearly, the maps $(\gamma
,\varepsilon )\mapsto \vartheta _{k}(\gamma ,\varepsilon )$ and $(\gamma
,\varepsilon )\mapsto \varsigma _{k_{1}k_{2}}(\gamma ,\varepsilon )$ are
measurable. In view of the choice of $l$ in (\ref{lk1}), see also (\ref{Qkl}%
), the subgraph of $(\gamma )_{R}$ generated by each $\gamma _{k}$ is
complete; hence, the value of $\vartheta _{k}$ depends only on $N(\gamma
_{k})$. Also due to the choice of $l$, each vertex of $\gamma _{k_{1}}$ is
adjacent (in $(\gamma )_{R}$) to each vertex of $\gamma _{k_{2}}$ whenever $%
k_{1}\sim k_{2}$.

Let $P$ be the joint probability distribution of the random fields $\left\{
\vartheta _{k}\right\}_{k\in \mathsf{V}} $ and $\left\{ \varsigma
_{k_{1}k_{2}}\right\}_{\{k_1,k_2\}\in \mathsf{E}} $ induced by the measure $%
\zeta $ in (\ref{q-thinning}). By the very definition of the set $\widehat{%
\Gamma }(n_{\ast },a)$, see (\ref{1}), we have that $P(\vartheta _{k}=1)=1$
for each $k\in \mathsf{L}^{c}:=\mathsf{V} \setminus \mathsf{L}$, and also $%
P(\varsigma_{k_1k_2}=1)=1$ for all $k_1 \sim k_2$ such that $\vartheta_{k_1}
= \vartheta_{k_2} = 1$. Let $Q$ be the probability measure on $\{0,1\}^{%
\mathsf{V}}\times \{0,1\}^{\mathsf{E}}$ defined as follows. Its projection
on $\{0,1\}^{\mathsf{V}}$ is the product measure such that $Q(\vartheta
_{k}=1)=q_0$ for some $q_0\in (0,1)$ which will be chosen later, and $%
Q(\varsigma _{k_{1}k_{2}}=1)=1$ for all $k_{1}\sim k_{2}$ such that $%
\vartheta_{k_1} = \vartheta_{k_2} = 1$.

As in \cite[Section 3.4]{H}, we introduce the usual componentwise partial
order on $\{0,1\}^{\mathsf{V}}\times \{0,1\}^{\mathsf{E}}$, and the
corresponding increasing real-valued functions on this set. Let $P_{1}$ and $%
P_{2}$ be probability measures on $\{0,1\}^{\mathsf{V}}\times \{0,1\}^{%
\mathsf{E}}$. We say that $P_{2}$ stochastically dominates $P_{1}$ and write
$P_{1}\prec P_{2}$ if
\begin{equation*}
\int fdP_{1}\leq \int fdP_{2}
\end{equation*}%
for each increasing $f$.

We begin by comparing measures $Q$ and $P$ introduced above. Since $%
P(\varsigma _{k_{1}k_{2}}=1)=Q(\varsigma _{k_{1}k_{2}}=1)$ for each $%
k_{1}\sim k_{2}$, we restrict our attention to the random variables $%
\vartheta _{k}$. As in the proof of \cite[Theorem 2.1]{RT1}, by (\ref{spec})
and (\ref{di3A}) one can show that, see also (\ref{NN0}) below,
\begin{equation}
P\left( \vartheta _{k}=1\ \forall k\in \mathsf{V}_{1};\ \ \vartheta _{k}=0\
\forall k\in \mathsf{V}_{2}\right) >0,  \label{irr}
\end{equation}%
which holds for all disjoint $\mathsf{V}_{1},\mathsf{V}_{2}\subset \mathsf{L}
$. Thus, $P$ is irreducible in the sense of \cite[Section 3.4]{H}. Recall
that $P$ depends on the choice of $z$ and $n_{\ast }$, and $Q$ depends on
the choice of $q_{0}\in (0,1)$.

To prove Lemma \ref{th-perc} we need the following result which will be
proved in the next section.

\begin{lemma}
\label{5lm} For each $n_{\ast }\in \mathbb{N}$ and $q_{0}\in (0,1)$ there
exists $z_{c}>0$ such that $Q\prec P$ for any $z>z_{c}$.
\end{lemma}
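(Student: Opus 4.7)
My plan is to verify the Holley-type criterion for stochastic domination of a product measure by a general one (see, e.g., \cite[Section 3.4]{H}): $Q \prec P$ on $\{0,1\}^{\mathsf{V}}\times\{0,1\}^{\mathsf{E}}$ follows once one establishes
\begin{equation*}
P\bigl(\vartheta_{k}=1 \mid \vartheta_{\mathsf{V}\setminus\{k\}} = \sigma\bigr) \ge q_{0}
\end{equation*}
for every $k \in \mathsf{V}$ and every $\sigma$ with $P(\vartheta_{\mathsf{V}\setminus\{k\}} = \sigma)>0$, the existence of which is guaranteed by the irreducibility (\ref{irr}). The edge variables need no separate treatment: by the choice $l = R/(2\sqrt{d})$ in (\ref{lk1}), $\varsigma^{P}_{k_1 k_2}=1$ whenever $k_1 \sim k_2$ and both $\gamma_{k_i}$ are nonempty, whereas $\varsigma^{Q}_{k_1k_2} = \vartheta^{Q}_{k_1}\vartheta^{Q}_{k_2}$ by construction. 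Hence any monotone coupling of the vertex marginals automatically extends to a monotone coupling of the full joint laws.

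For $k \in \mathsf{L}^c$ the inequality is trivial since $\hat\xi \in \widehat{\Gamma}(n_\ast, a)$ forces $\vartheta_k=1$ $P$-a.s. For $k \in \mathsf{L}$, because $\vartheta_k$ is measurable with respect to $\gamma_k$ alone, the tower property reduces the task to proving
\begin{equation*}
\essinf_{\gamma_{\Xi_{k}^{c}}}\; P_{\Lambda}^{\hat{\xi}}\bigl(N(\gamma_{k}) \ge n_{\ast} \mid \gamma_{\Xi_{k}^{c}}\bigr) \ge q_{0}.
\end{equation*}
The DLR equation (\ref{di3A}) gives the conditional law of $\gamma_k$ on $\Gamma(\Xi_k)$ in Poisson-weighted form,
\begin{equation*}
P_{\Lambda}^{\hat{\xi}}(d\gamma_{k} \mid \gamma_{\Xi_{k}^{c}}) \propto e^{-W_{k}(\gamma_{k};\,\gamma,\hat{\xi})}\, \lambda_{z}(d\gamma_{k}),
\end{equation*}
where $W_{k}$ combines the direct pair energy $H(\gamma_{k}\mid\gamma_{\Xi_{k}^{c}})$ with the effective spin-integration contribution $-\log[Q_{\Lambda}(\gamma\cup\gamma_{k},\hat{\xi}_{\Lambda^{c}}) / Q_{\Lambda}(\gamma,\hat{\xi}_{\Lambda^{c}})]$.

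The core of the proof is to show that
\begin{equation*}
\frac{P_{\Lambda}^{\hat{\xi}}\bigl(N(\gamma_{k}) \ge n_{\ast} \mid \gamma_{\Xi_{k}^{c}}\bigr)}{P_{\Lambda}^{\hat{\xi}}\bigl(N(\gamma_{k}) = 0 \mid \gamma_{\Xi_{k}^{c}}\bigr)} \ge C\, z^{n_{\ast}}
\end{equation*}
for some $C > 0$ independent of $z$, $\Lambda$, $\hat{\xi}$ and $\gamma_{\Xi_{k}^{c}}$. I would obtain this by restricting the numerator to test configurations $\gamma_{k}^{\ast} \subset \Xi_{k}$ with exactly $n_{\ast}$ points placed at pairwise distances exceeding $r$ and at distance exceeding $r$ from $\partial\Xi_{k}$; such placements are admissible thanks to Assumption (M)(6) and the choice of $l$. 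On such $\gamma_{k}^{\ast}$: the internal part of $H$ vanishes (since $\Phi_{+}$ is supported in $B_{r}$); the cross-terms with the boundary are bounded by a constant times $N(\gamma \cap (\Xi_{k})_{r})$ via the lower bound on $\Phi$ and the finite range of $\Phi_{+}$; and the log-ratio of spin partition functions is bounded by a constant times $n_{\ast}\cdot N(\gamma \cap (\Xi_{k})_{R})$ using the support of $\phi$ in $B_{R}$, the fact that the $\hat{\xi}$-spins are $\pm a$, and the exponential moment (\ref{s-quad}) of $\chi$. Integrating $e^{-W_{k}(\gamma_{k}^{\ast})}$ against $\lambda_{z}$ over a neighborhood of the set of such $\gamma_{k}^{\ast}$'s produces a factor of order $z^{n_{\ast}}/n_{\ast}!$, while the boundary-density prefactors match the corresponding estimate for $e^{-W_{k}(\emptyset)}$ in the denominator and cancel. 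Choosing $z_{c}$ so that $C z^{n_{\ast}} \ge q_{0}/(1-q_{0})$ for $z > z_{c}$ then completes the argument.

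The principal obstacle lies exactly in this uniformity of $W_{k}$ with respect to the unbounded data $(\gamma_{\Xi_{k}^{c}}, \hat{\xi}_{\Lambda^c})$. For the direct pair part the finite range of $\Phi_{+}$ makes the control routine; the delicate point is the effective spin contribution, which a priori depends on the global spin partition function $Q_{\Lambda}$. The saving feature is that $\phi$ is supported in $B_{R}$, so the ratio $Q_{\Lambda}(\gamma \cup \gamma_{k}^{\ast})/Q_{\Lambda}(\gamma)$ depends only on the spins at points of $\gamma$ lying in $(\Xi_{k})_{R}$ together with the finitely many adjacent $\hat{\xi}$-spins, which for $\hat\xi \in \widehat{\Gamma}(n_{\ast},a)$ have modulus $a$. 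A Peierls-type expansion of this ratio in the style of \cite[Thm.~2.1]{RT1}---the argument already invoked in the text to produce (\ref{irr})---then furnishes the required uniform upper bound, completing the verification of the Holley criterion.
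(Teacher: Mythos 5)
Your overall reduction — invoke Holley's criterion, dispense with edges and with $k\in\mathsf{L}^c$, then bound a single-cell conditional probability uniformly by a ratio argument gaining a power of $z$ — matches the paper's architecture (Proposition \ref{Holpn}, then Lemmas \ref{6lm} and \ref{lemma1}). But the single step you yourself flag as ``the principal obstacle'' is precisely where the proposal breaks down, and your proposed fix does not close it.

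The difficulty is the factor $Q_\Lambda(\gamma\cup\gamma_k^*,\hat\xi_{\Lambda^c})/Q_\Lambda(\gamma,\hat\xi_{\Lambda^c})$ in $e^{-W_k}$. You propose to bound its logarithm \emph{in absolute value} by $Cn_*\,N(\gamma\cap(\Xi_k)_R)$ and to ``cancel'' this against a boundary-density prefactor in the denominator. There is nothing to cancel against: the denominator is $P_{\Lambda}^{\hat\xi}(N(\gamma_k)=0\mid\gamma_{\Xi_k^c})\propto e^{-W_k(\emptyset)}=1$, so the normalizing constants already cancel and you are left needing a lower bound on $\int_{N=n_*}e^{-W_k}\,d\lambda_z$ that is uniform in the (unbounded) boundary configuration. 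A bound of the form $e^{-Cn_*N(\gamma\cap(\Xi_k)_R)}$ is not uniform, and the appeal to a Peierls-type expansion à la \cite{RT1} is not a substitute for an argument — that reference controls \emph{position}-space superstability, not the spin partition function ratio. (A smaller inaccuracy: for points at pairwise distance $>r$ the internal part of $H$ does not ``vanish''; only $\Phi_+$ does, $\Phi$ itself may be negative, which happens to help but should be stated.)

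The paper's resolution is a short observation you do not have and which is the real content of Lemma \ref{6lm}: by the symmetry of $\chi$, the one-particle spin insertion factor
\begin{equation*}
G(x,\hat\gamma)=\int_S \exp\Bigl(s\sum_{y\in\gamma}\phi(x-y)\sigma_y\Bigr)\chi(ds)\ \geq\ \chi\bigl(\{s:\ s\sum_{y}\phi(x-y)\sigma_y\geq 0\}\bigr)\ \geq\ \tfrac12
\end{equation*}
\emph{uniformly} in $x$ and in the entire (unbounded) spin--position boundary data. This gives a lower bound on the spin partition function ratio that does not see $N(\gamma\cap(\Xi_k)_R)$ at all. Combined with the Markov-inequality control of the positional repulsion on a shrunken cell $\Delta\subset\Xi_k$ (possible because $r<R/4$ guarantees $\mathrm{Vol}(\Delta)>0$ and, crucially, the positional bound needs only $N(\gamma_{\Xi_k})<n_*$, not control of the boundary), one gets the uniform $g_*>0$ of (\ref{est3}). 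The paper then iterates one particle at a time, $\omega_{n+1}\geq z t_*\,\omega_n$, rather than inserting $n_*$ points at once; your all-at-once variant would also work once the $G\geq 1/2$ lower bound replaces your upper bound on the log-ratio, but as written the proposal is missing exactly that ingredient.
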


For a given $q\in (0,1)$ and $n\in \mathbb{N}$, consider an $n$-element set
and connect any two elements of it by an edge with probability $q$,
independently of other edges. Denote by $\varphi (n,q)$ the probability that
the resulting graph is connected. It is known that
\begin{equation}
\varphi (n,q)\geq 1-(n-1)(1-q^{2})^{n-2},\quad n\geq 3,  \label{di10}
\end{equation}%
and hence $\varphi (n,q)\rightarrow 1$ as $n\rightarrow +\infty $, see \cite[%
Lemma 3.4]{GeHa}. By (\ref{di10}) one gets
\begin{equation}
\varrho (n,q):=\inf_{m\geq n}\varphi (m,q)\rightarrow 1\quad \mathrm{as}\ \
n\rightarrow +\infty .  \label{di10a}
\end{equation}%
Likewise, for two sets $A$ and $B$ consisting of $n_{1}$ and $n_{2}$
elements respectively, connect any $a\in A$ and $b\in B$ with each other by
an edge with probaility $q$, independently of other edges. Let $\psi
(n_{1},n_{2},q)$ be the probability that there is at least one edge
connecting $A$ and $B$. Obviously,
\begin{equation}
\psi (n_{1},n_{2},q)=1-(1-q)^{n_{1}n_{2}}.  \label{NN2}
\end{equation}%
Set
\begin{equation}
h(n,q)=\varrho (n,q)\psi (n,n,q).  \label{NK}
\end{equation}

\noindent \textit{Proof of Lemma \ref{th-perc}.} For given $q_{1},q_{2} \in
(0,1)$, let $Q_{q_{1},q_{2}}$ be the measure on $\{0,1\}^{\mathsf{V}}\times
\{0,1\}^{\mathsf{E}}$ such that its projection on $\{0,1\}^{\mathsf{V}}$ is
the product measure for which $Q_{q_{1},q_{2}}(\vartheta _{k}=1)=q_0 q_{1}$,
and $Q_{q_{1},q_{2}}(\varsigma _{k_{1}k_{2}}=1)=q_{2}$ for all $k_{1}\sim
k_{2}$ such that $\vartheta_{k_1} = \vartheta_{k_2} = 1$. That is, $%
Q_{q_{1},q_{2}}$ is the corresponding thinning of the measure $Q$.

For a finite $\mathsf{V}^{\prime }\subset \mathsf{V}$, let $\mathsf{G}%
^{\prime }:=(\mathsf{V}^{\prime },\mathsf{E}^{\prime })$ be a subgraph of $%
\mathsf{Z}$. By $|\mathsf{V}^{\prime }|$ and $|\mathsf{E}^{\prime }|$ we
denote the cardinalities of the corresponding sets. Consider the event $A_{%
\mathsf{G}^{\prime }}=\left\{ \vartheta _{k}=1,\ k\in \mathsf{V}^{\prime },\
\text{and~}\varsigma _{k_{1}k_{2}}=1,\ \{k_{1},k_{2}\}\in \mathsf{E}^{\prime
}\right\} $. By Lemma \ref{5lm}, for the corresponding values of the
parameters $n_{\ast },q,z$ and $q_{0}$ we have
\begin{equation}
Q_{q_{1},q_{2}}(A_{\mathsf{G}^{\prime }})=(q_{0}q_{1})^{|\mathsf{V}^{\prime
}|}q_{2}^{|\mathsf{E}^{\prime }|}=q_{1}^{|\mathsf{V}^{\prime }|}q_{2}^{|%
\mathsf{E}^{\prime }|}Q(A_{\mathsf{G}^{\prime }})\leq q_{1}^{|\mathsf{V}%
^{\prime }|}q_{2}^{|\mathsf{E}^{\prime }|}P(A_{\mathsf{G}^{\prime }}).
\label{NN0}
\end{equation}%
The right-hand side can be estimated in terms of the measure $\zeta ^{q}$
defined in (\ref{Laplacee}) and (\ref{q-thinning}). To this end, we set
\begin{equation}
q_{1}=\varphi (n_{\ast },q),\quad q_{2}=\psi (n_{\ast },n_{\ast },q),
\label{NM1}
\end{equation}%
where $\varphi $ is as in (\ref{di10}), (\ref{di10a}). We then have
\begin{eqnarray}
&&q_{1}^{|\mathsf{V}^{\prime }|}q_{2}^{|\mathsf{E}^{\prime }|}P(A_{\mathsf{G}%
^{\prime }})\leq \int_{\Gamma (X)\times E}\left( \prod_{k\in \mathsf{V}%
^{\prime }}\vartheta (\gamma ,\varepsilon )\varphi (N(\gamma _{k}),q)\right)
\label{NM} \\[0.2cm]
&&\quad \times \left( \prod_{\{k_{1},k_{2}\}\in \mathsf{E}^{\prime
}}\varsigma _{k_{1}k_{2}}(\gamma ,\varepsilon )\psi (N(\gamma
_{k_{1}}),N(\gamma _{k_{2}}),q)\right) \zeta (d\gamma ,d\varepsilon )  \notag
\\[0.2cm]
&&\quad =\int_{\Gamma (X)\times E}\left( \prod_{k\in \mathsf{V}^{\prime
}}\vartheta (\gamma ,\varepsilon )\prod_{\{k_{1},k_{2}\}\in \mathsf{E}%
^{\prime }}\varsigma _{k_{1}k_{2}}(\gamma ,\varepsilon )\right) \zeta
^{q}(d\gamma ,d\varepsilon )  \notag \\[0.2cm]
&=&P^{q}(A_{\mathsf{G}^{\prime }}),  \notag
\end{eqnarray}%
where $P^{q}$ is the joint probability distribution of $\left\{ \vartheta
_{k}\right\} _{k\in \mathsf{V}}$ and $\left\{ \varsigma
_{k_{1}k_{2}}\right\} _{\{k_{1},k_{2}\}\in \mathsf{E}}$ induced by the
measure $\zeta ^{q}$ in (\ref{q-thinning}). Combining (\ref{NN0}) and (\ref%
{NM}) we then get $Q_{q_{1},q_{2}}\prec P^{q}$.

Let $0\leftrightarrow \infty $ denote the event that $0\in \mathsf{Z}$
belongs to an infinite connected component of the graph. Then by (\ref{NN0})
and (\ref{NM}), for $q_{1}$ and $q_{2}$ as in (\ref{NM1}) we have
\begin{eqnarray}
Q_{q_{1},q_{2}}(0\leftrightarrow \infty ) &\leq &P^{q}(0\leftrightarrow
\infty )  \label{NM2} \\[0.2cm]
&=&\zeta ^{q}\left( \left\{ (\gamma ,\varepsilon ):x\leftrightarrow \infty \
\text{for all }x\in \gamma _{0}\right\} \right) .  \notag
\end{eqnarray}%
cf. (\ref{perc0}). To estimate the left-hand side of (\ref{NM2}) we proceed
as follows. For a given subgraph $\mathsf{G}\subseteq \mathsf{Z}$, let $%
\theta ^{\mathrm{site}}(p;\mathsf{G})$ (resp. $\theta ^{\mathrm{bond}}(p;%
\mathsf{G})$), $p\in (0,1)$, be the probability of the event $%
0\leftrightarrow \infty $ in the Bernoulli site (resp. bond) percolation
model on $\mathsf{G}$ with site (resp. bond) probability $p$. It is known
that, see \cite{GS},
\begin{equation}
\theta ^{\mathrm{site}}(p;\mathsf{G})\leq p\theta ^{\mathrm{bond}}(p;\mathsf{%
G})\leq \theta ^{\mathrm{bond}}(p;\mathsf{G}).  \label{NN}
\end{equation}%
Let $\mathsf{G}_{p}$ be the random graph obtained from $\mathsf{Z}$ by
independent deleting sites with probability $1-p$. By construction of the
measure $Q_{q_{1},q_{2}}$ and in view of (\ref{NN}) we have the estimate
\begin{eqnarray}
&&Q_{q_{1},q_{2}}(0\leftrightarrow \infty )=\theta ^{\mathrm{bond}}(q_{2};%
\mathsf{G}_{q_{0}q_{1}})\theta ^{\mathrm{site}}(q_{0}q_{1};\mathsf{Z})
\label{di9} \\[0.2cm]
&&\quad \geq \theta ^{\mathrm{site}}(q_{2};\mathsf{G}_{q_{0}q_{1}})\theta ^{%
\mathrm{site}}(q_{0}q_{1};\mathsf{Z})=\theta ^{\mathrm{site}%
}(q_{0}q_{1}q_{2};\mathsf{Z})>0.  \notag
\end{eqnarray}%
For $d\geq 2$, the latter estimate holds whenever
\begin{equation}
q_{0}q_{1}q_{2}>p^{\mathrm{site}}(d),  \label{NN1}
\end{equation}%
where $p^{\mathrm{site}}(d)$ is the threshold probability for the Bernoulli
site percolation on $\mathsf{Z}$. Thus (\ref{di9}) turns into the following
condition, see (\ref{di10a}), (\ref{NK}), (\ref{NN2}), and (\ref{NM1}):
\begin{equation}
q_{0}h(n_{\ast },q)>p^{\mathrm{site}}(d).  \label{A}
\end{equation}

Now we can finalize the proof of Lemma \ref{th-perc}. Fix an arbitrary $q\in
(0,1)$, pick $n_{\ast }$ such that $h(n_{\ast },q)>p^{\mathrm{site}}(d)$ and
choose any $q_{0}<1\ $satisfying (\ref{A}). For these $n_{\ast }$ and $q_{0}$
let $z_{c}$ be as in Lemma \ref{5lm}. Then for any $z>z_{c}$ we have $Q\prec
P$, which yields (\ref{NM2}). Bound (\ref{perc0}) follows now by (\ref{di9})
with $\theta =p^{\mathrm{site}}(d)/2$. Note that estimates (\ref{NM2}) and (%
\ref{di9}) are uniform in $\Lambda $ and $a$, which completes the proof.
\hfill%
$\square $

\subsection{Proof of Lemma \protect\ref{5lm}}

We start with the following technical estimate. Recall that the parameters $%
r $ and $R$ satisfy (\ref{pq-rel}). Let $\Xi $ be any of the cells (\ref{Qkl}%
), (\ref{lk1}) and $\Delta \subset \Xi $ be such that $|x-y|>r$ for each $%
x\in \Delta $ and $y\in \Xi ^{c}$. That is, $\Delta =\Xi \setminus \ \{%
\mathrm{boundary}\ \mathrm{layer}\ \mathrm{of}\ \mathrm{thickness}\ r\}$.
Thus, there is no repulsion between the particles located at $x\in \Delta $
and $y\in \Xi ^{c}$. Observe that the Euclidean volume $\mathrm{Vol}(\Delta
) $ is positive in view of (\ref{pq-rel}). Then, for $x\in \Delta $ and $%
\hat{\gamma}\in \Gamma (X,S)$, we set
\begin{equation}
g(\hat{\gamma})=\int_{X}\exp \left( -\sum_{y\in \gamma }\Phi (x-y)\right)
G(x,\hat{\gamma})dx,  \label{g-fun}
\end{equation}%
where
\begin{equation}
G(x,\hat{\gamma})=\int_{S}\exp \left( s\sum_{y\in \gamma }\phi (x-y)\sigma
_{y}\right) \chi (ds).  \label{dib1}
\end{equation}

\begin{lemma}
\label{6lm} For an arbitrary $n_{\ast }\in \mathbb{N}$, there exists $%
g_{\ast }>0$ such that
\begin{equation}
g(\hat{\gamma})\geq g_{\ast }  \label{est3}
\end{equation}%
for all $\hat{\gamma}\in \Gamma (X,S)$ with $N(\gamma_{\Xi})<n_{\ast }$.
\end{lemma}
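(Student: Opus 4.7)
The strategy is to bound each of the two factors in the integrand of (\ref{g-fun}) from below, uniformly on a subset of $\Delta$ of positive Lebesgue measure. Since the integrand is nonnegative and $\Delta\subset X$, this yields a uniform lower bound on $g(\hat{\gamma})$.

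First I would observe that $G(x,\hat{\gamma})\geq 1$ for every $x$. By the symmetry of $\chi$ in Assumption (M.5), $\int_S s\,\chi(ds)=0$, and since $\phi$ has support in $B_R$ while $\gamma$ is locally finite, the quantity $c(x):=\sum_{y\in\gamma}\phi(x-y)\sigma_y$ is a finite real number. Jensen's inequality applied to the convex function $e^{(\cdot)}$ then gives
\[
G(x,\hat{\gamma})=\int_S e^{s c(x)}\chi(ds)\geq \exp\Bigl(c(x)\int_S s\,\chi(ds)\Bigr)=1,
\]
so this factor may be dropped in the sequel. Next I would reduce the outer exponential to a sum over $\gamma_{\Xi}$: for $x\in\Delta$ and $y\in\gamma_{\Xi^{c}}$ one has $|x-y|>r$ by the very definition of $\Delta$, so Assumption (M.1) forces $\Phi_{+}(x-y)=0$, hence $\Phi(x-y)\leq 0$, and these terms contribute nonnegatively to $-\sum_{y\in\gamma}\Phi(x-y)$ and can be discarded. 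For $y\in\gamma_{\Xi}$ I would use the crude $-\Phi(x-y)\geq -\Phi_{+}(x-y)$, obtaining
\[
\exp\Bigl(-\sum_{y\in\gamma}\Phi(x-y)\Bigr)\geq \exp\Bigl(-\sum_{y\in\gamma_{\Xi}}\Phi_{+}(x-y)\Bigr).
\]

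The crux is then to find a subset of $\Delta$ of volume bounded below by a constant depending only on $n_{\ast}$, on which the latter sum is bounded by some $M$ also depending only on $n_{\ast}$. Write $\gamma_{\Xi}=\{y_{1},\dots,y_{N}\}$ with $N<n_{\ast}$, and for $\delta>0$ set $A_{\delta}:=\Delta\setminus\bigcup_{i}B(y_{i},\delta)$. Since $\mathrm{Vol}(\Delta)>0$ by (\ref{pq-rel}) and depends only on the cell geometry, one may choose $\delta$ so small (uniformly in $\gamma$) that $\mathrm{Vol}(A_{\delta})\geq \tfrac{1}{2}\mathrm{Vol}(\Delta)$. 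On $A_{\delta}$ every summand carries the condition $|x-y_{i}|\geq\delta$, so Assumption (M.2) yields $\int_{A_{\delta}}\sum_{i}\Phi_{+}(x-y_{i})\,dx\leq n_{\ast}C_{\delta}$. A Markov inequality then shows that by choosing $M$ with $n_{\ast}C_{\delta}/M\leq \tfrac{1}{4}\mathrm{Vol}(\Delta)$, the ``good set'' $B:=\{x\in A_{\delta}:\sum_{i}\Phi_{+}(x-y_{i})\leq M\}$ has volume at least $\tfrac{1}{4}\mathrm{Vol}(\Delta)$. On $B$ the integrand is at least $e^{-M}$, so $g(\hat{\gamma})\geq \tfrac{1}{4}e^{-M}\mathrm{Vol}(\Delta)=:g_{\ast}$.

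The main obstacle is the possible non-local-integrability of $\Phi_{+}$ near the origin, e.g.\ in the canonical strong-superstability example $\Phi(x)\sim c|x|^{-d(1+\epsilon)}$: no pointwise bound on the summands is available even after excluding a small boundary layer in $\Delta$, and this is precisely why the combination of small-ball exclusion around each $y_{i}\in\gamma_{\Xi}$ with a Markov argument against the $L^{1}$-tail control provided by (M.2) is essential. The bound on $N(\gamma_{\Xi})$ is used twice: to keep the excluded set small, and to keep the first moment bounded by $n_{\ast}C_{\delta}$.
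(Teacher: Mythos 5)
Your proposal is correct and follows essentially the same route as the paper: restrict to $\Delta$, drop small $\delta$-balls around the (fewer than $n_{\ast}$) points of $\gamma_{\Xi}$, and apply Markov's inequality against the $L^{1}$-tail bound from Assumption (M.2). The only minor divergence is in bounding $G(x,\hat{\gamma})$: you use Jensen's inequality together with $\int_{S}s\,\chi(ds)=0$ to get $G\geq 1$, whereas the paper argues from the symmetry of $\chi$ that the set $\{s:sc(x)\geq 0\}$ has $\chi$-measure at least $1/2$, giving $G\geq 1/2$; both are correct and the difference is immaterial for the conclusion.
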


\begin{proof}
Fix $n_{\ast }\in \mathbb{N}$ and $\hat{\gamma}=(\gamma ,\sigma_\gamma )$
such that $N(\gamma_{\Xi})<n_{\ast }$. Choose $\delta $ such that $\mathrm{%
Vol~}\Delta -\left( n_{\ast }-1\right) \mathrm{Vol}~(B_{\delta })>0$, where $%
B_{\delta }$ is the ball of radius $\delta $ centered at the origin in $X$.
Define the set $\Delta _{\gamma }$ by removing from $\Delta $ the balls of
radius $\delta $ with centers at the elements of $\gamma \in \Gamma (X)$,
that is,
\begin{equation*}
\Delta _{\gamma }:=\left\{ x\in \Delta :\ \left\vert x-y\right\vert \geq
\delta ,~y\in \gamma \right\} .
\end{equation*}
Then $\mathrm{Vol}(\Delta _{\gamma })\geq \mathrm{Vol}(\Delta )-N(\gamma
_{\Xi})\mathrm{Vol}~(B_{\delta })\geq \mathrm{Vol}(\Delta )-\left( n_{\ast
}-1\right) \mathrm{Vol}~(B_{\delta })=:v_{\ast }$. For a given $c>0$,
introduce the sets
\begin{equation*}
\Delta _{\gamma ,c}:=\left\{ x\in \Delta _{\gamma }:\sum_{y\in \gamma }\Phi
(x-y)\geq c\right\}
\end{equation*}%
and
\begin{equation*}
S_{x,\hat{\gamma}}:=\left\{ s\in S:s\sum_{y\in \gamma }\phi ( x-y)\sigma
_{y}\geq 0\right\} .
\end{equation*}%
For each $x$ and $\hat{\gamma}$, we have either $S_{x,\hat{\gamma}}=\mathbb{R%
}_{\pm}$ or $S_{x,\hat{\gamma}}=\mathbb{R}$, which together with the
symmetry of $\chi $ implies that $\chi (S_{x,\hat{\gamma}})\geq \frac{1}{2}$%
, and hence, see (\ref{dib1})
\begin{equation*}
G(x,\hat{\gamma}) \geq \frac{1}{2} , \qquad x\in X, \ \ \hat{\gamma} \in
\Gamma(X,S).
\end{equation*}
Now we take this into account in (\ref{g-fun}) and obtain
\begin{equation*}
g(\hat{\gamma}) \geq \frac{1}{2} \int_{\Delta} \exp\left(-\sum_{y\in
\gamma}\Phi_{+} (x-y) \right) d y \geq \frac{e^{-c}}{2} \mathrm{Vol}(\Delta
_{\gamma }\diagdown \Delta _{\gamma ,c}).
\end{equation*}
To estimate the latter quantity we use Markov's inequality
\begin{eqnarray*}
\mathrm{Vol}(\Delta _{\gamma ,c}) \leq \frac{1}{c}\int_{\Delta _{\gamma
}}\sum_{y\in \gamma_\Xi}\Phi _{+}(x-y)dx \leq \frac{1}{c}N(\gamma
_{\Xi})\int_{\left\vert x\right\vert >\delta }\Phi _{+}(x)dx,
\end{eqnarray*}
which yields, see (\ref{int}),
\begin{equation*}
\mathrm{Vol}(\Delta _{\gamma }\diagdown \Delta _{\gamma ,c})\geq v_{\ast }-%
\frac{n_{\ast }-1}{c}C_\delta,
\end{equation*}%
so that
\begin{equation*}
g(\hat{\gamma})\geq \frac{1}{2}e^{-c}\left( v_{\ast }-\frac{n_{\ast }-1}{c}%
C_\delta\right) .
\end{equation*}%
It is clear that the right-hand side is positive for sufficiently large $c$.
The (\ref{est3}) follows with $g_{\ast }=\underset{c> 0}{\mathrm{sup}}%
\mathrm{~}\frac{1}{2}e^{-c}\left( v_{\ast }-\frac{n_{\ast }-1}{c}%
C_\delta\right)$.
\end{proof}

\noindent By (\ref{irr}) and (\ref{NN0}) we know that $P$ is irreducible.
Hence, we can apply here Holley's theorem, see \cite[Theorem 3.7]{H}, and
obtain the following statement.

\begin{proposition}
\label{Holpn} Assume that the inequality%
\begin{equation}
P\left( \vartheta _{k}=1|\vartheta _{k^{\prime }}=\beta _{k^{\prime }},\
k^{\prime }\in \mathsf{L}\setminus \{k\}\right) \geq Q(\vartheta _{k}=1)
\label{Hol}
\end{equation}
holds for each $k\in \mathsf{L}$ and $\beta \in \{0,1\}^{\mathsf{L}\setminus
\{k\}}$. Then $Q\prec P$.
\end{proposition}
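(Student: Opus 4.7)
The plan is to derive $Q \prec P$ from hypothesis (\ref{Hol}) in two stages: first to obtain stochastic domination of the vertex marginals by a direct invocation of Holley's theorem, and then to upgrade to domination of the full joint measures by exploiting the deterministic, monotone dependence of the edge variables on the vertex variables under $Q$.

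Denote by $P_{\mathsf{V}}$ and $Q_{\mathsf{V}}$ the marginals of $P$ and $Q$ on $\{0,1\}^{\mathsf{V}}$. By construction of $Q$ (independent Bernoulli$(q_0)$ on $\mathsf{L}$ and deterministic $1$ on $\mathsf{L}^{c}$), the right-hand side of (\ref{Hol}) coincides with $Q_{\mathsf{V}}(\vartheta_k = 1 \mid \vartheta_{k'} = \beta_{k'},\ k' \in \mathsf{L}\setminus\{k\})$. Thus (\ref{Hol}) is precisely the single-site Holley comparison between $P_{\mathsf{V}}$ and $Q_{\mathsf{V}}$. Irreducibility of $P_{\mathsf{V}}$ is inherited from (\ref{irr}). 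Applying \cite[Theorem 3.7]{H} to the pair $(Q_{\mathsf{V}}, P_{\mathsf{V}})$ yields $Q_{\mathsf{V}} \prec P_{\mathsf{V}}$.

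Next, introduce the deterministic map $\phi : \{0,1\}^{\mathsf{V}} \to \{0,1\}^{\mathsf{E}}$ by $\phi(\vartheta)_{\{k_1,k_2\}} := \vartheta_{k_1} \wedge \vartheta_{k_2}$, which is increasing. By the construction of $Q$ one has $\varsigma = \phi(\vartheta)$ $Q$-almost surely. Under $P$, the choice $l = R/(2\sqrt{d})$ in (\ref{lk1}) ensures that every pair of points from adjacent cells $\Xi_{k_1}, \Xi_{k_2}$ lies within distance $R$; hence $\vartheta_{k_1} = \vartheta_{k_2} = 1$ forces the existence of an edge between $\gamma_{k_1}$ and $\gamma_{k_2}$ in $(\gamma)_R$, so $\varsigma_{k_1 k_2} \geq \vartheta_{k_1} \wedge \vartheta_{k_2}$ pointwise, $P$-a.s.

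Finally, fix an arbitrary bounded measurable increasing $f : \{0,1\}^{\mathsf{V}} \times \{0,1\}^{\mathsf{E}} \to \mathbb{R}$ and set $\tilde f(\vartheta) := f(\vartheta, \phi(\vartheta))$. As a composition of increasing maps $\tilde f$ is increasing on $\{0,1\}^{\mathsf{V}}$. The pointwise bound $\varsigma \geq \phi(\vartheta)$ $P$-a.s., together with the monotonicity of $f$, gives
\[
\int f\, dP \;\geq\; \int \tilde f(\vartheta)\, dP_{\mathsf{V}}(\vartheta) \;\geq\; \int \tilde f(\vartheta)\, dQ_{\mathsf{V}}(\vartheta) \;=\; \int f\, dQ,
\]
where the middle inequality is $Q_{\mathsf{V}} \prec P_{\mathsf{V}}$ applied to the increasing function $\tilde f$, and the final equality uses $\varsigma = \phi(\vartheta)$ $Q$-a.s. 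This is $Q \prec P$. The principal technical obstacle is the invocation of Holley's theorem on the infinite product space $\{0,1\}^{\mathsf{V}}$, for which irreducibility is essential; this is precisely what (\ref{irr}) supplies. The remaining argument is a monotone-coupling bookkeeping that reduces the joint comparison to the vertex-marginal one and rests on the geometric fact that a pair of "dense" adjacent cells must be connected in $(\gamma)_R$.
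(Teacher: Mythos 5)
Your proof is correct and rests on the same key tool as the paper's, namely Holley's theorem in the form of \cite[Theorem 3.7]{H} applied to the vertex variables, with irreducibility supplied by (\ref{irr}). The difference is one of explicitness: the paper simply cites Holley's theorem and states the proposition, leaving implicit the passage from domination of the vertex marginals $Q_{\mathsf V}\prec P_{\mathsf V}$ to domination of the full joint measures on $\{0,1\}^{\mathsf V}\times\{0,1\}^{\mathsf E}$. You supply that missing step cleanly: under $Q$ the edge field is exactly the increasing deterministic function $\varsigma=\phi(\vartheta)$ with $\phi(\vartheta)_{\{k_1,k_2\}}=\vartheta_{k_1}\wedge\vartheta_{k_2}$, while under $P$ the choice $l=R/2\sqrt{d}$ forces $\varsigma\geq\phi(\vartheta)$ pointwise (every point of $\gamma_{k_1}$ is within distance $R$ of every point of $\gamma_{k_2}$ for $k_1\sim k_2$), and the three-step chain of inequalities for an increasing $f$ is exactly the right monotone lifting. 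This is worth having, since a naive direct application of Holley on the joint space would fail: condition (\ref{Hol}) gives no single-site control of the edge coordinates, and the joint measures are not irreducible there. One small remark: the invocation of Holley's theorem is less delicate than you suggest, because $\vartheta_k=1$ deterministically for $k\in\mathsf L^c$ under both measures, so the comparison effectively takes place on the finite set $\{0,1\}^{\mathsf L}$; irreducibility on $\mathsf L$, which is what (\ref{irr}) provides, is all that is needed.
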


Recall that $P$ is determined by $P_{\Lambda }^{\hat{\xi}}$ with a fixed $%
\hat{\xi}\in \widehat{\Gamma }(n_{\ast },a)$. For this $\hat{\xi}$, and $k$
and $\beta $ as in (\ref{Hol}), we pick $\hat{\eta}\in \Gamma ^{\mathrm{t}%
}(X,S)$ such that: (a) $\hat{\eta}_{\Lambda ^{c}}=\hat{\xi}_{\Lambda ^{c}}$;
(b) $\vartheta _{k^{\prime }}(\eta )=\beta _{k^{\prime }}$ for each $%
k^{\prime }\in \mathsf{L}\setminus \{k\}$. Then
\begin{equation}
P\left( \vartheta _{k}=1|\vartheta _{k^{\prime }}=\beta _{k^{\prime }},\
k^{\prime }\in \mathsf{L}\setminus \{k\}\right) =P_{\Lambda }^{\hat{\xi}%
}(N(\gamma _{k})\geq n_{\ast }|\hat{\eta}),  \label{Hol1}
\end{equation}%
Observe that the conditional measure $P_{\Lambda }^{\hat{\xi}}(\cdot |\hat{%
\eta})$ can be obtained in the form
\begin{equation}
P_{\Lambda }^{\hat{\xi}}(d\gamma |\hat{\eta})=\int_{S^{\eta _{\Lambda
\setminus \Xi _{k}}}}P_{\Xi _{k}}^{\hat{\eta}}(d\gamma )\chi _{\eta
_{\Lambda \setminus \Xi _{k}}}(d\sigma _{\eta _{\Lambda \setminus \Xi
_{k}}}),  \label{int1}
\end{equation}%
see (\ref{di2}).

\begin{lemma}
\label{lemma1} Let $k$ and $\hat{\eta}$ be as in (\ref{Hol}), (\ref{Hol1}), (%
\ref{int1}). Then for any $n_{\ast }\in \mathbb{N}$ and $q_{0}\in (0,1)$
there exists $z_{c}>0$ such that
\begin{equation}
P_{\Xi _{k}}^{\hat{\eta}}(N(\gamma _{k})\geq n_{\ast })\geq q_{0}
\label{zc1}
\end{equation}%
for all $z>z_{c}$.
\end{lemma}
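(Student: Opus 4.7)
The plan is to decompose the partition function $Z_{\Xi_k}(\hat{\eta})$ by the cardinality of $\gamma_{\Xi_k}$ and derive a one-sided ``insertion recursion'' that forces almost all of the Gibbs mass onto configurations with at least $n_{\ast}$ interior points once $z$ is large. Using the Lebesgue--Poisson expansion (\ref{LPM}) in the numerator and denominator of
\[
P^{\hat{\eta}}_{\Xi_k}\bigl(N(\gamma_k)\geq n_{\ast}\bigr) = \frac{1}{Z_{\Xi_k}(\hat{\eta})}\int_{\{N(\gamma_{\Xi_k})\geq n_{\ast}\}} Q_{\Xi_k}(\hat{\eta}_{\Xi_k^c})\, e^{-H(\gamma_{\Xi_k}|\eta_{\Xi_k^c})}\,\lambda_z(d\gamma_{\Xi_k}),
\]
cf.\ (\ref{di3A}), I would rewrite this probability as $\sum_{n\geq n_{\ast}} I_n\big/\sum_{n\geq 0} I_n$, where
\[
I_n := \frac{z^n}{n!}\int_{(\Xi_k\times S)^n} e^{-H_n - E_n}\prod_{i=1}^n \chi(d\sigma_i)\,dx_i
\]
collects the contribution of $n$-particle interior configurations against the fixed boundary $\hat{\eta}_{\Xi_k^c}$, with $H_n,E_n$ denoting the pair interactions inside $\Xi_k$ and across its boundary.

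The central step is the recursion
\[
I_{n+1}\geq \frac{zg_{\ast}}{n+1}\,I_n, \qquad 0\leq n<n_{\ast},
\]
with $g_{\ast}>0$ the constant from Lemma~\ref{6lm}. To derive it, I would single out the $(n+1)$st label in the symmetric integrand defining $I_{n+1}$ and use Fubini to carry out the integration in $(x_{n+1},\sigma_{n+1})\in \Xi_k\times S$ first; this splits off the factor $z/(n+1)$ and leaves the inner integral
\[
\int_{\Xi_k} e^{-\sum_{y\in\gamma}\Phi(x_{n+1}-y)}\,G(x_{n+1},\hat{\gamma})\,dx_{n+1},
\]
with $\hat{\gamma}=\{(x_i,\sigma_i)\}_{i=1}^n\cup \hat{\eta}_{\Xi_k^c}$, which is precisely the quantity $g(\hat{\gamma})$ from (\ref{g-fun})--(\ref{dib1}) with the outer positional integration restricted from $X$ to $\Xi_k$. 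Since the proof of Lemma~\ref{6lm} only uses points in $\Delta\subset\Xi_k$, and since $N(\gamma_{\Xi})=n<n_{\ast}$, the restriction is harmless and one obtains $\geq g_{\ast}$ uniformly in $\hat{\eta}$.

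Iterating the recursion yields $I_n\leq (n_{\ast}!/n!)(zg_{\ast})^{n-n_{\ast}}I_{n_{\ast}}$ for $n<n_{\ast}$, hence
\[
P^{\hat{\eta}}_{\Xi_k}\bigl(N(\gamma_k)\geq n_{\ast}\bigr)\geq \frac{I_{n_{\ast}}}{I_{n_{\ast}}+\sum_{n=0}^{n_{\ast}-1}I_n}\geq \Bigg(1+\sum_{n=0}^{n_{\ast}-1}\frac{n_{\ast}!}{n!}(zg_{\ast})^{n-n_{\ast}}\Bigg)^{-1},
\]
and the right-hand side tends to $1$ as $z\to\infty$ uniformly in $\hat{\eta}$ and $\Lambda$. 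Choosing $z_c$ so large that this lower bound exceeds $q_0$ finishes the argument. The main obstacle --- and the very reason Lemma~\ref{6lm} was set up in advance --- is securing a one-particle insertion weight bounded below by a constant that is uniform both in the (possibly rough) boundary configuration $\hat{\eta}_{\Xi_k^c}$ and in the interior points $x_1,\dots,x_n$ (which may cluster arbitrarily); the volume/Markov estimate on $\Delta$ combined with the symmetry of $\chi$ in Lemma~\ref{6lm} is exactly what supplies this uniform bound, so at the present stage the remaining work is bookkeeping of the insertion recursion.
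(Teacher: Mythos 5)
Your proposal is correct and follows essentially the same route as the paper: both decompose by the cardinality $N(\gamma_k)=n$, derive the insertion recursion with the one-particle weight $g(\hat{\gamma})\geq g_{\ast}$ from Lemma \ref{6lm}, and iterate to push the Gibbs mass onto $\{N(\gamma_k)\geq n_{\ast}\}$ as $z\to\infty$. The only (cosmetic) difference is that the paper works with the normalized probabilities $\omega_n = P^{\hat{\eta}}_{\Xi_k}(N(\gamma_k)=n)$ and simplifies the recursion constant to $t_{\ast}=g_{\ast}/n_{\ast}$, obtaining the clean bound $1-1/(zt_{\ast})$, whereas you keep the unnormalized Lebesgue--Poisson integrals $I_n$ and the factorials explicit.
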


\begin{proof}
Let $I_{n}$ be the indicator function of the set $\{\gamma :N(\gamma
_{k})=n\}$, $n\in \mathbb{N}$. Set also $\omega _{n}=P_{\Xi _{k}}^{\hat{\eta}%
}(N(\gamma _{k})=n)$. By (\ref{LPM}, (\ref{cond-en-H}), (\ref{di3A}), and (%
\ref{g-fun}) for $n<n_{\ast }$ we get
\begin{eqnarray*}
\omega _{n+1} &=&\frac{1}{n+1}\int_{\Gamma (X,S)}\left( \sum_{x\in \gamma
_{k}}I_{n+1}(\gamma )\right) \Pi _{\Xi _{k}}(d\hat{\gamma}|\hat{\eta}) \\%
[0.2cm]
&=&\frac{z}{n+1}\int_{\Gamma (X,S)}I_{n}(\gamma )g(\hat{\gamma})\Pi _{\Xi
_{k}}(d\hat{\gamma}|\hat{\eta}) \\[0.2cm]
&\geq &\frac{zg_{\ast }}{n+1}\omega _{n}\geq zt_{\ast }\omega _{n},\qquad
t_{\ast }:=g_{\ast }/n_{\ast },
\end{eqnarray*}%
where we have taken into account that $n+1\leq n_{\ast }$ and used (\ref%
{est3}). The latter estimate readily yields
\begin{equation*}
\sum_{n=0}^{n_{\ast }-1}\omega _{n}\leq \frac{\omega _{n_{\ast }}}{zt_{\ast
}-1}\leq \frac{1}{zt_{\ast }-1}\sum_{n\geq n_{\ast }}\omega _{n}.
\end{equation*}%
Taking into account that $\sum_{n\geq 0}\omega _{n}=1$ we obtain that%
\begin{equation*}
P_{\Xi _{k}}^{\hat{\eta}}(N(\gamma _{k})\geq n_{\ast })\geq 1-\frac{1}{%
zt_{\ast }}.
\end{equation*}%
Now we can set
\begin{equation}
z_{c}=\left( t_{\ast }(1-q_{0})\right) ^{-1},  \label{zc}
\end{equation}%
and (\ref{zc1}) follows.
\end{proof}

\noindent \textit{Proof of Lemma \ref{5lm}.} For $z>z_{c}$ given in (\ref{zc}%
), we have $P_{\Xi _{k}}^{\hat{\eta}}(N(\gamma _{k}\geq n_{\ast }))\geq q_0$%
, which by (\ref{int1}) and (\ref{Hol1}) yields (\ref{Hol}) and hence $%
Q\prec P $ by Proposition \ref{Holpn}.
\hfill%
$\square $






\vskip.2cm \noindent \textbf{Acknowledgment:} This work was financially
supported by the DFG through the SFB 701: `Spektrale Strukturen und
Topologische Methoden in der Mathematik' and by the European Commission
under the project STREVCOMS PIRSES-2013-612669.


\end{document}